\newcommand{\cX}{{\mathcal X}}
\newcommand{\cY}{{\mathcal Y}}
\newcommand{\nn}{{\mathbb N}}
\newcommand{\eps}{{\varepsilon}}        
\newcommand{\by}{{\mathbf y}}
\newcommand{\cP}{\mathcal P}
\newtheorem{theorem}{Theorem}
\newtheorem{algorithm}{Algorithm}
\newtheorem{definition}[theorem]{Definition}
\newtheorem{example}[theorem]{Example}
\newtheorem{lemma}[theorem]{Lemma}
\newtheorem{remark}[theorem]{Remark}
\newtheorem{solution[theorem]}{Solution}
\newcommand{\supp}{\mathrm{supp}}
\newcommand{\eins}{{\mathbbm{1}}}
\newcommand*{\rom}[1]{\expandafter\@slowromancap\romannumeral #1@}
\DeclareMathOperator{\sgn}{sgn}
\DeclareRobustCommand*{\IEEEauthorrefmark}[1]{\raisebox{0pt}[0pt][0pt]{\textsuperscript{\footnotesize #1}}}
\begin{document}

\pagestyle{plain}
\thispagestyle{empty}

\title{Information Theoretic Principles \protect\\ of Universal Discrete Denoising}

\author{\IEEEauthorblockN{Janis N\"otzel\IEEEauthorrefmark{$\ast$,1,2} and
                          Andreas Winter\IEEEauthorrefmark{$\times$,1,3}}

\IEEEauthorblockA{\IEEEauthorrefmark{1}%
 Departament de F\'{\i}sica: Grup d'Informaci\'{o} Qu\`{a}ntica,\\
 Universitat Aut\`{o}noma de Barcelona,
 ES-08193 Bellaterra (Barcelona), Spain.}

\IEEEauthorblockA{\IEEEauthorrefmark{2}%
 Theoretische Nachrichtentechnik, Technische Universit\"at Dresden,
 D-01187 Dresden, Germany.}

\IEEEauthorblockA{\IEEEauthorrefmark{3}%
 ICREA---Instituci\'{o} Catalana de Recerca i Estudis Avan\c{c}ats, Pg.~Lluis Companys, 23,
 ES-08010 Barcelona, Spain.}

\IEEEauthorblockA{Email: \IEEEauthorrefmark{$\ast$}janis.noetzel@tu-dresden.de,
                         \IEEEauthorrefmark{$\times$}andreas.winter@uab.cat}

\IEEEauthorblockA{\smallskip(23 May 2017)}
}

\maketitle
	
\begin{abstract}
Today, the internet makes tremendous amounts of data widely available.
Often, the same information is behind multiple different available data sets.
This lends growing importance to latent variable models that try to learn the
hidden information from the available \emph{imperfect}€ versions. For example,
social media platforms can contain an abundance of pictures of the same person
or object, yet all of which are taken from different perspectives.
In a simplified scenario, one may consider pictures taken from the same
perspective, which are distorted by noise. This latter application allows for
a rigorous mathematical treatment, which is the content of this contribution.
We apply a recently developed method of dependent component analysis to image
denoising when multiple distorted copies of one and the same image are available,
each being corrupted by a different and unknown noise process.

In a simplified scenario, we assume that the distorted image is corrupted
by noise that acts independently on each pixel. We answer completely the
question of how to perform optimal denoising, when at least three distorted
copies are available: First we define optimality of an algorithm in the presented
scenario, and then we describe an aymptotically optimal universal discrete denoising
algorithm (UDDA).
In the case of binary data and binary symmetric noise, we develop a
simplified variant of the algorithm, dubbed BUDDA, which we prove to
attain universal denoising uniformly.
\end{abstract}

{\smallskip\bf Keywords:} blind detection, image denoising, hidden variable,
latent variable, internet of things.

\begin{section}{Introduction}
\label{sec:intro}

Consider the following, idealised situation of inference from noisy data:
In the past, a number $K$ of pictures has been taken of an object.
Each picture consists of a number $n$ of pixels, and each pixel takes
values in a finite alphabet $\cX$ of colours, $|\cX|=L$. We would like to
recover from these images a ``true'' image, of which we assume all $K$
that we have at our disposal are noisy copies.

To be precise, we assume that each of the pictures is the output
of a memoryless (i.i.d.) noise process $W$, mapping an element
$x\in\cX$ to another element $y\in\cX$ with probability $w(y|x)$.
If we knew $W$ as well as the frequency $p(x)$ of the colours $x$ in
the original, we could choose some permutation $\Phi:\cX\to\cX$
as the denoising rule.
If one is agnostic about the colour of each pixel and applies
any such denoising to a long enough string on a pixel-by-pixel basis, it will produce
a recovered image that, with high probability, deviates from the original
in a fraction of pixels asymptotically equal to
\begin{equation}\begin{split}
  1-\max_{\tau} &\sum_{x\in\cX}p(\tau(x))w(\Phi^{-1}(x)|\tau(x)) \\
                &\quad
                 = 1 - \max_{\tau}\sum_{x\in\cX}p(x)w(\tau(x)|x),
  \label{eqn:colour-agnostic-solution}
\end{split}\end{equation}
where the maximum is over all permutations $\tau:\cX\to\cX$.
Given that we are more interested in the
structure of the image, and not in the question whether it is, say,
black-and-white or white-and-black, this is a satisfactory solution.

However, if the number of possible outputs of the noise process grows, such simple
solutions turn out to be not optimal anymore, because more information is available
in the string that a letter-by-letter algorithm does not see.
Consider the following example, where a channel maps the black and white pixels
represented by the alphabet $\{1,2\}$ to three different colours $\{1,2,3\}$.
Let the probabilistic law of this process be given by $w(1|1)=w(3|1)=1/2$ and $w(2|2)=1$.
A simple denoising algorithm may map the elements of the set $\{1,2\}$ to $1$
and those in $\{3\}$ to $2$. If the input bits are equally distributed, the
probability that this algorithm produces an error is lower bounded by $1/4$
-- even if one is agnostic as to an exchange $1\leftrightarrow 2$ at the input.
Instead, knowing the channel lets one perform perfect denoising,
i.e.~with zero error.

The problem is thus to know the channel, which it turns out can be
achieved by looking at several independent noisy copies instead of only one.
We now give another example about large classes of channels which allow
us to achieve a nontrivial inference about the original
by a generic algorithmic strategy.

\begin{example}
Let the binary symmetric channels (BSC)
$W_1$, $W_2$, $W_3$ be given, each with a respective probability
$w_i$ of transmitting the input bit correctly, and the channel $W$ from $\{0,1\}$
to $\{0,1\}^3$ is the product, giving the same input to all:
$W(y_1y_2y_3|x) = \prod_{j=1}^3 w_j(y_j|x)$. Then, a simple way of guessing the
input bit is to use the majority rule, i.e.~the output is mapped to $0$ iff
$N(0|y_1y_2y_3)>N(1|y_1y_2y_3)$. Assume that the input string has equal number of
$0$ and $1$.
For BSC parameters $w_1=0.1$, $w_2=0.45$ and $w_3=0.9$, this gives a probability
of correct decoding that is bounded as $4/10\leq p_\mathrm{correct}\leq 5/10$.

Thus, even taking into account our agnostic view regarding the labelling of the input
alphabet, the probability of guessing the right input up to a permutation is upper bounded
by $6/10$.

Now consider a genie-aided observer that is given BSC parameters.
Such an observer would come to the conclusion that the second channel, $W_2$, actually
conveys little information to him, and could decide to not consider its output at all.
The genie-aided observer could then come up with the following rule for guessing
the input bit: The outputs in the set $\{00,01\}$ get mapped to $0$,
while the outputs in the set $\{11,10\}$ get mapped to $1$.
Then, the probability of correctly guessing the input bit would rise to $9/10$.
\end{example}

Note that the permutation of
the colours is an unavoidable limit to any recovery procedure in which we
do not want to make any assumptions about the channels and the colour
distribution $p$.


\medskip
Based on these considerations, we propose
the general mathematical formulation of our problem is as follows.
As before we assume the existence of one ``original'' version of the picture, which
is represented by a string $x^n\in\cX^n$ where $n\in\mathbb N$. A number $K$ of
copies of the image is given, where each copy is modelled by a string
$\mathbf y_j=(y_{1j},\ldots,y_{nj})$, and $j$ ranges from $1$ to $K$.
These images are assumed to be influenced by an i.i.d.~noise process, such that the
probability of receiving the matrix $\by=(y_{ij})_{i,j=1}^{n,K}$ of distorted images
is given by
\begin{align}
  \label{eqn:probabilistic-law}
  \mathbb P(\by|x^n) = \prod_{i,j=1}^{n,K}w_j(y_{ij}|x_i).
\end{align}
The probabilistic laws $w_1,\ldots,w_K$ crucially are not known.
The question is: What is the optimal way of recovering $x^n$ from the output $\by$?
Before we formalize this problem, we give an overview over the structure of the paper and then of the related work.

\end{section}

\begin{section}{\label{sec:related-work}Related work}
The earlier work in the area of image denoising has focussed mostly on situations where the probabilistic law governing the noise process is (partially) known. Among the work in that area are publications like \cite{wosvw-DUDE,ordentlich-seroussi-weinberger,gemelos-sigurjonsson-weissman-1,gemelos-sigurjonsson-weissman-2}. In particular, the work \cite{wosvw-DUDE} laid the information theoretic foundation of image denoising when the channel is known. It has been extended to cover
analysis of grayscale images, thereby dealing with the problem incurred by a large alphabet, here \cite{morsw-iDUDE}. The DUDE framework
has also been extended to cover continuous alphabets \cite{morsw-continuous-DUDE} and in other directions \cite{ovw-twice-universal}.
An extension to cover noise models with memory has been developed in \cite{gy-DUDE-with-memory}. Optimality of the DUDE has been further
analyzed in \cite{vo-lower-limits-DUDE}, where it was compared to an omniscient denoiser, that is tuned to the transmitted noiseless
sequence. An algorithm for general images has e.g. been demonstrated here \cite{cao-luo-yang}.

Cases where the probabilistic law behind the noise process is known partially were treated in
\cite{gemelos-sigurjonsson-weissman-1,gemelos-sigurjonsson-weissman-2}.
A tour of modern image filtering with focus on denoising can be found in the work \cite{milanfar}. The feasibility of the task of multi
copy image denoising in a very particular scenario that is similar to the one treated here has been demonstrated
\cite{shih-kwatra-chinen-fang-ioffe}. However, that work assumes Gaussian noise, whereas our work is able to handle any form of noise within the category of finite-alphabet memoryless noise processes. Sparsity has been exploited for coloured images in \cite{mairal-elad-sapiro}, universal compressed sensing in \cite{jalali-poor}. A new method for image denoising was presented first here \cite{bcm-NL-means} and then further developed in
\cite{buades-coll-morel}. A scenario that is related to the one treated here has been investigated in the recent work \cite{ns16}, but
without application to a specific problem. The focus of the work \cite{ns16} was to demonstrate that certain hidden variable models have a one to one correspondence to a particular subset of probability distributions. The results presented in \cite{ns16} are closely related to the famous work \cite{Kruskal76} of Kruskal, which forms one theoretical cornerstone of the PARAFAC model (see the review article \cite{acar-yener-2009} to get an impression of methods in higher order data analysis). The connections between PARAFAC, multivariate statistical analysis \cite{rencher-cristensen-2012} and more information-theoretic models has also been used in the recent work \cite{AGHKT14}, where the authors basically elaborated on methods to efficiently perform step $3$ of our algorithm in the restricted scenario where all channels are identical: $w_1(y|x)=\ldots=w_K(y|x)$ for all $y\in\cY$ and $x\in\cX$. The interested reader can find details regarding algorithms for fitting the PARAFAC model in \cite{tomasi-bro2006}. Further connections to deep learning, current implementations and other areas of application are to be found in \cite{ko-pa-pa}. Our Theorems \ref{theorem:basis-for-UDDA} and \ref{theorem:main-BSC} provide an improvement over all the existing algorithms in a restricted but important scenario.
Despite some similarities, the problem treated here is to be distinguished from the CEO problem \cite{prabhakaran-tse-ramachandran2004,berger-zhang1994}, which is also receiving attention recently \cite{kipnis-rini-goldsmith} but is focussed on rate constraints in the process of data combining from the multiple sources. In addition, noise levels are assumed to be known in the CEO problem.
\end{section}

\begin{section}{Definitions}
\begin{subsection}{Notation\label{subsec:notation}}
The set of permutations of the elements of a finite set $\cX$ is denoted $S_\cX$.
Given two finite sets $\cX,\cY$, their product is $\cX\times\cY:=\{(x,y):x\in\cX,y\in\cY\}$.
For any natural number $n$, $\cX^n$ is the $n$-fold product of $\cX$ with itself.
For a string $x^n\in\cX^n$, and an element $x'\in\cX$, let
$N(x'|x^n) := |\{i: x_i=x'\}|$ be the number of times $x'$ occurs in the string.
The probability distribution $\overline{N}(x'|x^n) := \frac1n N(x'|x^n)$
is called the \emph{type} of $x^n$.
The set $\mathcal P(\cX)$ of probability distributions on $\cX$ is identified with
the subset
\begin{align}
  \left\{p\in\mathbb R^{|\cX|} : p=\sum_{x\in\cX}p(x)e_x\right\}
\end{align}
of $\mathbb{R}^{|\cX|}$, where $\{e_x\}_{x\in\cX}$ is the standard orthonormal basis of $\mathbb R^{|\cX|}$. The support of
$p\in\cP(\cX)$ is $\supp(p):=\{x\in\cX:p(x)>0\}$.
We define two important subsets of $\cP(\cX)$, for a conventional
numbering $\cX = \{x_1,x_2,\ldots,x_L\}$:
\begin{align}
  \cP_>(\cX)          &:= \left\{p\in\cP(\cX) : p(x)>0\ \forall\ x\in\cX\right\}, \\
  \cP^\downarrow(\cX) &:= \left\{p\in\cP(\cX) : p(x_1) \geq p(x_2) \geq \ldots \geq p(x_L) > 0 \right\}.
\end{align}
A subset of elements of $\cP(\cX)$ is the set of its extreme points,
the Dirac measures:
for $x\in\cX$, $\delta_x\in\cP(\cX)$ is defined through $\delta_x(x')=\delta(x,x')$,
where $\delta(\cdot,\cdot)$ is the usual Kronecker delta symbol. The indicator
functions are defined for subsets $S\subset\cX$ via $\eins_{S}(x)=1$ if
$x\in S$ and $\eins_{\cX}(x)=0$ otherwise.

We will have to measure how different two elements $p,p'\in\cP(\cX)$ are. This may be
done using (for any $\alpha\geq1$) the $\alpha$-norms
$\|p-p'\|_\alpha:= \sqrt[\alpha]{\sum_{x\in\cX}|p(x)-p'(x)|^\alpha}$.
If $\alpha=1$, this reduces, up to factor of $2$, to the total variational distance,
which we will use in the following while dropping the subscript.

The noise that complicates the task of estimating $p$ is modelled by matrices $W$
of conditional probability distributions $(w(y|x))_{x\in\cX,y\in\mathbf Y}$ whose
entries are numbers in the interval $[0,1]$ satisfying, for all $x\in\cX$,
$\sum_{y\in\cY}w(y|x)=1$; such objects are also called stochastic matrices.
To given finite sets $\cX$ and $\cY$, the set of all such matrices is denoted
$C(\cX,\cY)$ in what follows. These stochastic matrices are, using standard
terminology of Shannon information theory, equivalently called a ``channel''.
In this work, we restrict attention to those stochastic matrices that are invertible,
denoting their set $C_0(\cX,\cY)$.
A special channel is the ``diagonal'' $\Delta\in C(\cX,\cX^K)$, defined
as $\Delta(\delta_x):=\delta_x^{\otimes K}$. For a probability distribution
$p\in\cP(\cX)$, we write $p^{(K)} := \Delta(p)$, which assigns probability
$p(x)$ to $xx\ldots x$, and $0$ to all other strings.
If two channels $W_1$ and $W_2$ act independently in parallel (as in
\eqref{eqn:probabilistic-law} with $K=2$ and $n=1$), one writes $W_1\otimes W_2$
for this channel, and more generally for $K\geq 2$.
We will abbreviate the $K$-tuple $(W_1,\ldots,W_K)$ as $\mathbf W$, if
$K$ is known from the context.
A \emph{dependent component system} is simply a collection
$(p,\mathbf{W}) = (p,W_1,\ldots,W_K)$ consisting of a distribution $p$ on $\cX$
and channels $W \in C(\cX,\cY)$.
The distance between two dependent component systems is defined as
\begin{align}
  \|(r,\mathbf V)-(s,\mathbf W)\|_{DCS} := \|r-s\|_1+\sum_{j=1}^K\|V_j-W_j\|_{FB},
\end{align}
where $\|V_j-W_j\|_{FB} := \max_x \sum_{y} |v_j(y|x)-w_j(y|x)|$.

Finally, we quantify the distance between an image (or string of letters)
and another such image (or string of letters).

\begin{definition}[Distortion measure]
\label{defn:distortion-measure}
Any function $d:\cX\times\cX\to\mathbb R_{\geq 0}$ is said to be a
\emph{distortion measure}.
Clearly, this definition includes as a subclass the distortion measures that
are at the same time a metric. The special case $d(x,x'):=\delta(x,x')$
of the Kronecker delta
is known as the Hamming metric. It is understood that any distortion measure
$d$ naturally extends to a distortion measure on $\cX^n\times\cX^n$ via
\begin{align}
\label{eqn:distance-measure-on-n-fold-alphabet}
  d(x^n,y^n) := \frac{1}{n}\sum_{i=1}^n d(x_i,y_i).
\end{align}
\end{definition}
\end{subsection}

\begin{subsection}{Definition of Resolution}
In the following, we assume a distortion measure $d$ on $\cX\times\cX$.
With respect to such a measure, the performance of a denoising algorithm
can be defined. In the special case treated here, we are interested in a
form of universal performance that does not depend on any knowledge regarding
the specific noise parameters.
Our way of defining universality is in the spirit of Shannon information theory.
Such a definition brings with it two important features: First, it is an
asymptotic definition, and secondly it is a definition that is agnostic as
to the specific meaning of a pixel color. For example, we do not make any
distinction between a black-and-white picture and a version of that picture
where black and white have been interchanged throughout. This motivates the
following definition.

\begin{definition}[Universal algorithm]
\label{defn:weak-performance-of-an-algorithm}
\label{defn:performance-of-an-algorithm}
Let $\mathbf W\in C(\cX,\cX)^K$ and $p\in\mathcal{P}(\cX)$ be given.
A sequence $\mathcal C=(C_n)_{n\in\nn}$ of algorithms $C_n:(\cX^K)^n\to\cX^n$ is
said to achieve \emph{resolution $D\geq 0$ on $\mathbf W$ with respect to $p$},
if for every $\eps>0$ there is an $N=N(p,\mathbf W,\eps)$ such that for
all sequences $(x^n)_{n\in\nn}$ of elements of $\cX^n$ with
limit $\lim_{n\to\infty} \overline{N}(\cdot|x^n)=p$ we have
\begin{align}
  \forall n\geq N(p,\mathbf{W},\eps)\quad
  \mathbb P\left( \min_\tau d\bigl(C_n(\cdot),\tau^{\otimes n}(x^n) \bigr) > D \right) < \eps.
\end{align}
The number
\begin{align}
\mathcal R(p,\mathbf W)
  := \inf\left\{D :\begin{array}{l} \exists\,\mathcal C \text{ achieving distortion}\\
                                    \phantom{===:} D \text{ with respect to } p
                   \end{array}\right\}
\end{align}
is the \emph{resolution of $\mathbf W$ with respect to $p$}.
A sequence of algorithms is \emph{universal} if it achieves
resolution $\mathcal R(p,\mathbf W)$ for all $p$ and $\mathbf W$.

A sequence of algorithms is \emph{uniformly universal} on
$\mathcal S\subset C(\mathcal X,\mathcal X)$ if for all
$\mathbf W\in\mathcal S^K$ the number
$N=N(\mathbf{W},\eps)=N(p,\mathbf{W},\epsilon)$ can be chosen
independently of $p$. If such a sequence exists, we say that the
\emph{resolution $D$ is achieved uniformly}. Then,
\begin{align}
\overline{\mathcal{R}}(p,\mathbf W)
  := \inf\left\{D :\begin{array}{l} \exists\,\mathcal C \text{ uniformly achieving}\\
                                    \phantom{===:} \text{distortion } D \text{ w.r.t. } p
                   \end{array}\right\}
\end{align}
is called the \emph{uniform resolution of $\mathbf W$}.
Clearly, $\mathcal{R}(p,\mathbf W) \leq \overline{\mathcal{R}}(p,\mathbf W)$.
\end{definition}
Note that, if a sequence of uniformly universal algorithms exists, this implies that these algorithms deliver provably optimal performance
for \emph{every} possible input sequence $x^n$, if only $n\geq N(\mathbf W,\eps)$ is large enough.

We stress that the definition of resolution does in general depend on
$p$, as can be seen from the following example.

\begin{example}
Let $W\in\mathcal C(\{0,1,2\},\{0,1\})$ be defined by $w(0|0)=w(0|1)=1$
and $w(1|2)=1$ and $d$ be the hamming distortion. Then, clearly,
every two $p,\hat p\in\cP(\{0,1,2\})$ satisfying $p(2)=\hat p(2)=0$ have
the property $W(p)=W(\hat p)$. Thus, no two sequences $x^n,\hat
x^n\in\{0,1,2\}^n$ having $N(2|x^n)=N(2|\hat x^n)=0$ can be distinguished
from each other after having passed through the channel.

Let $C_n$ be any decoder, assuming without loss of
generality that $C_n(1^n)=1^n$. Consider $x^n=0^{n/2}1^{n/2}$; it follows
that $\min_\tau d(C_n(\cdot),\tau^{\otimes n}(x^n))=\frac12$ with probability $1$.

The same decoder applied to $\hat x^n=1^n$ or $\tilde x^n=0^n$ obviously
has $\min_\tau d(C_n(\cdot),\tau^{\otimes n}(\hat x^n))=0$ and
$\min_\tau d(C_n(\cdot),\tau^{\otimes n}(\tilde x^n))=0$, both once more with probability $1$.

This demonstrates the dependence of $\mathcal R(p,\mathbf W)$ on $p$ for
simple cases where $K=1$ and the channel matrix is non-invertible.
\end{example}

\begin{remark}
Note that our definition of universality explicitly forbids an
algorithm to know the probabilistic laws according to which the original
image gets distorted.
In that sense, it respects a stronger definition of universality as
for instance the DUDE algorithm \cite{wosvw-DUDE}: the DUDE algorithm
has to know the probabilistic law, which is assumed to be represented by
an invertible stochastic matrix. We will see in
Theorem \ref{theorem:main} in which situations an algorithm obeying our
stronger definition can be expected to deliver provably optimal
performance, and what its drawbacks in terms of performance are.
\end{remark}

%
%
\begin{definition}[Minimal clairvoyant ambiguous distortion]
Let $W\in C(\cX,\cY)$ be a channel and $p\in\mathcal P(\cX)$ a probability distribution.
Let $d:\cX\times\cX\to\mathbb R$ be a distortion measure.
The number
\begin{align}
  d_{\mathrm{MCA}}(p,W) := \min_{\tau\in S_\cX,\mathcal{T}} \sum_{x,x'\in\cX} p(x)d(x,x') w(T_{\tau(x')}|x),
\end{align}
is called the \emph{minimal clairvoyant ambiguous (MCA) distortion},
where each $\mathcal T=\{T_x\}_{x\in\cX}$ is partition of $\cY$,
i.e.~$T_x\cap T_{x'}=\emptyset$ for all $x\neq x'$ and $\bigcup_xT_x=\cY$.
It is the smallest expected distortion obtainable if the original $x^n$
is distributed according to $p^{\otimes n}$ and the dependent component
system $(p,W)$ is known to the decoder.
We call a minimizing partition $\mathcal T$ \emph{MCA decoder}.

For each collection $\mathbf W=(W_1,\ldots,W_K)$ of channels we define
\begin{align}
  d_{\mathrm{MCA}}(p,\mathbf W)
    := d_{\mathrm{MCA}}\left(p,\left(\bigotimes_{j=1}^K W_j\right)\circ\Delta\right),
\end{align}
with the ``diagonal'' channel $\Delta\in C(\cX,\cX^K)$.
\end{definition}
\end{subsection}
\end{section}

\begin{section}{Main results}
\label{sec:universal-algorithms-for-optimal-recovery}
In this section, we state our main result regarding universal
denoising and present two algorithms.
The first one is universal, while the second one is even uniformly universal,
but unfortunately only if restricted to dependent component systems with
binary symmetric channels.

\begin{subsection}{The universal discrete denoising algorithm}
\begin{theorem}
\label{theorem:main}
Let $K\geq3$ and $d$ be a distortion measure on $\cX\times\cX$.
Let $\mathbf{W}=(W_1,\ldots,W_K)\in C(\cX,\cX)^K$ be such that their
associated stochastic matrices are invertible, and $p\in\mathcal{P}(\cX)$.
Then, the resolution of $\mathbf{W}$ with respect to $p$ is
\begin{align}
  \mathcal R(p,\mathbf{W}) = d_{\mathrm{MCA}}(p,\mathbf W).
\end{align}
There exists an algorithm which is universal on the set $C_0(\cX,\cX)^K$.
Below, this universal discrete denoising algorithm (UDDA) is described.
\end{theorem}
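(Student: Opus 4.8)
The plan is to establish the two directions of the identity $\mathcal R(p,\mathbf W) = d_{\mathrm{MCA}}(p,\mathbf W)$ separately — a converse (no algorithm can do better) and an achievability (the UDDA attains it) — and then to spell out the UDDA itself as an explicit four-step procedure whose correctness follows from earlier results.

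\medskip
\noindent\textbf{Converse.} I would argue that even a clairvoyant decoder that knows $(p,\mathbf W)$ exactly cannot beat $d_{\mathrm{MCA}}(p,\mathbf W)$, up to the unavoidable input relabelling $\tau$. Fix any sequence $\mathcal C=(C_n)$ and any $x^n$ with type converging to $p$. Condition on the type class of $x^n$; by symmetry of the product channel $(\bigotimes_j W_j)\circ\Delta$ under permutations of the $n$ coordinates, the single-letter analysis applies. The key point is that any deterministic map from the output string to $\cX^n$ can, on a coordinate-by-coordinate basis and after averaging over the type class, be no better than the best single-letter ``decoder'' partition $\mathcal T=\{T_x\}$ of $\cX^K$, composed with a global permutation $\tau$ of the labels; a standard convexity/exchangeability argument (or a direct appeal to the i.i.d.\ structure of \eqref{eqn:probabilistic-law} and the law of large numbers for the empirical joint type of input and output) shows the expected per-letter distortion is at least $\min_{\tau,\mathcal T}\sum_{x,x'}p(x)d(x,x')w\bigl(T_{\tau(x')}\big|x\bigr)=d_{\mathrm{MCA}}(p,\mathbf W)$, with the corresponding concentration giving the probability bound in Definition~\ref{defn:performance-of-an-algorithm}. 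Hence $\mathcal R(p,\mathbf W)\ge d_{\mathrm{MCA}}(p,\mathbf W)$.

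\medskip
\noindent\textbf{Achievability via the UDDA.} Here is where the hypothesis $K\ge 3$ and invertibility of the $W_j$ are used, through the results of \cite{ns16,Kruskal76}. The algorithm has four steps. \emph{Step 1:} From the output matrix $\by$, form the empirical distribution $\widehat Q_n$ on $\cX^K$ of the columns $(y_{i1},\ldots,y_{iK})_{i=1}^n$; by the law of large numbers $\widehat Q_n\to Q:=\bigl((\bigotimes_j W_j)\circ\Delta\bigr)(p)$ in total variation, with explicit (type-counting) rate. \emph{Step 2:} Apply the identifiability result — Kruskal's theorem / the DCS uniqueness theorem of \cite{ns16} — which says that for $K\ge 3$ and invertible channels the map $(p,\mathbf W)\mapsto Q$ is injective up to the simultaneous relabelling $\tau\in S_\cX$ of the hidden variable, and moreover admits a continuous (indeed semialgebraic) inverse on its image; feeding in $\widehat Q_n$ therefore yields estimates $(\widehat p_n,\widehat{\mathbf W}_n)$ with $\|(\widehat p_n,\widehat{\mathbf W}_n)-(p,\mathbf W)\|_{DCS}\to 0$ in probability. \emph{Step 3:} Compute an MCA decoder $\widehat{\mathcal T}_n=\{\widehat T_x\}$ for the estimated system $(\widehat p_n,\widehat{\mathbf W}_n)$, i.e.\ a minimizing partition in the definition of $d_{\mathrm{MCA}}$. \emph{Step 4:} Output $C_n(\by):=\bigl(\widehat x_i\bigr)_i$ where $\widehat x_i$ is the label with $(y_{i1},\ldots,y_{iK})\in\widehat T_{\widehat x_i}$. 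One then checks that the per-letter distortion of this estimate, averaged against the empirical input-output joint type, converges to $d_{\mathrm{MCA}}(p,\mathbf W)$: continuity of $d_{\mathrm{MCA}}(\cdot,\cdot)$ and of the optimal partition value in the DCS metric handles the estimation error in Steps 1–2, while concentration of the empirical joint type of $(x^n,\by)$ around $p\cdot\bigl((\bigotimes_jW_j)\circ\Delta\bigr)$ handles the randomness of the channel. Combining with the converse gives $\mathcal R(p,\mathbf W)=d_{\mathrm{MCA}}(p,\mathbf W)$ and universality on $C_0(\cX,\cX)^K$.

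\medskip
\noindent\textbf{Main obstacle.} The delicate point is Step 2: turning the \emph{qualitative} identifiability of \cite{ns16,Kruskal76} into a \emph{quantitative, robust} statement — that an approximate $\widehat Q_n$ yields an approximate $(\widehat p_n,\widehat{\mathbf W}_n)$ in the $\|\cdot\|_{DCS}$ metric, uniformly enough that the subsequent MCA computation is continuous at the true system. This requires a stability/perturbation bound for the inverse of the DCS-to-distribution map, and care near the boundary where $p$ has small entries or the $W_j$ are near-singular (the constants, hence $N(p,\mathbf W,\eps)$, blow up there — which is precisely why this algorithm is only \emph{universal} and not \emph{uniformly} so, in contrast to BUDDA). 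A secondary nuisance is that the minimizing $\tau$ and partition $\mathcal T$ need not be unique, so one must argue with the \emph{value} $d_{\mathrm{MCA}}$ rather than with a particular minimizer, and show the chosen $\widehat{\mathcal T}_n$ is near-optimal for the true system by a standard $\eps$-argument on the continuous objective.
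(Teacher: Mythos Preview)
Your proposal is correct and follows essentially the same route as the paper's proof: form the empirical output distribution, invoke the identifiability result of \cite{ns16} to recover $(p,\mathbf W)$ up to a permutation, use continuity of this inverse (your ``main obstacle'' is exactly the content of the paper's Lemma~\ref{lemma:continuity-of-inverse}), compute the MCA decoder for the estimated system, and then combine continuity of the MCA value (the paper's Lemma~\ref{lemma:continuity-of-OA-recovery}) with concentration to conclude. The one substantive addition in your write-up is the explicit converse argument; the paper treats $\mathcal R(p,\mathbf W)\ge d_{\mathrm{MCA}}(p,\mathbf W)$ as essentially built into the definition of the MCA distortion and does not spell it out.
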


\begin{algorithm}[UDDA: Universal discrete denoising algorithm]
The following steps produce the desired estimate $\hat x^n$ for $x^n$:
\begin{enumerate}
\item Input is $(y_{ij})_{i,j=1}^{n,K}$.
\item Calculate $\hat q(\cdot):=\tfrac{1}{n}N(\cdot|(\by_1,\ldots,\by_K)) \in \cP(\cX^K)$.
\item Run the DCA algorithm \cite{ns16} (see Definition~\ref{def:DCA-algo} below) with input $\hat q$
      to find stochastic matrices $V_1,\ldots,V_K\in C(\cX,\cX)$ and $r\in\mathcal P(\cX)$.
\item Find the minimal clairvoyant ambiguous recovery $\mathcal T$ for $(r,V_1,\ldots,V_K)$.
\item For each $i\in[n]$, apply the MCA decoder to the sequence $(y_{i1},\ldots,y_{iK})$.
      This generates a sequence $\hat x^n = \hat x_1 \ldots \hat x_n$.
\item Output $\hat x^n$.
\end{enumerate}
\end{algorithm}

\begin{remark}
An obvious modification of this algorithm is to deviate from the above definition in
step 4, and instead run the DUDE algorithm from Ref.~\cite{wosvw-DUDE} or one of its modifications.
The results in \cite{gemelos-sigurjonsson-weissman} imply that optimality holds also
in that case, since the deviation of the estimated tuple $(r,V_1,\ldots,V_K)$
from the original $(p,W_1,\ldots,W_K)$ goes to $0$ as $n\to\infty$.
\end{remark}

\begin{definition}
\label{def:DCA-algo}
The DCA algorithm from \cite{ns16} is defined by the following simple rule:
\begin{align}
  \hat q \longmapsto
   \arg\min\left\{\bigl\|(\bigotimes_{i=1}^KV_i)r^{(K)}-\hat q\bigr\|
                   : \begin{array}{l} r\in\mathcal P^\downarrow(\cX),\\
                                      V_i\in C(\cX,\cX)
                     \end{array}\right\}.
\end{align}
\end{definition}

The reason that this simple algorithm yields correct solutions is based on
the following observation, which is a reformulation of
\cite[Thm.~1]{ns16} and stated without proof.

\begin{theorem}
\label{theorem:extended-uniqueness-of-solution}
Let $K\geq3$, $p\in\cP(\cX)$ and let $W_1,\ldots,W_K\in C_0(\cX,\cX)$
be invertible channels. There are exactly $|\cX|!$ tuples $(V_1,\ldots,V_K,p')$
of channels $V_1,\ldots,V_K\in C(\cX,\cX)$ and probability distributions $p'\in \cP(\cX)$
satisfying the equation
\begin{align}
\label{eqn:invertibility-L-L}
\sum_{x\in\cX}\prod_{i=1}^Kw_i(y_i|x)p(x)=\sum_{x\in\cX}\prod_{i=1}^Kv_i(y_i|x)p'(x)
\end{align}
simultaneously for all $y_1,\ldots,y_K\in\cX$.
These are as follows: For every permutations $\tau:\cX\to\cX$, the matrices
$V_i=W_i\circ\tau^{-1}$ and the probability distribution $p'=\tau(p)$
solve (\ref{eqn:invertibility-L-L}), and these are the only solutions.
As a consequence, the function $\Theta:\cP(\cX)\times\mathcal W(\cX,\cX)^{K}\to\cP(\cX^K)$ defined by
\begin{align}
\label{eqn:def-of-theta}
\Theta[\left(p,T_1,\ldots,T_K\right)]:=\left(\bigotimes_{i=1}^KT_i\right)\left(\sum_{i=1}^Lp(i)\delta_i^{\otimes K}\right)
\end{align}
is invertible if restricted to the correct subset:
there exists a function $\Theta':\mathrm{ran}(\Theta)\to\cP^\downarrow(\cX)$ such that
\begin{align}
\label{eqn:inversion-property-of-Theta'}
\Theta'(\Theta[(p,T_1,\ldots,T_K)])=p,
\end{align}
for all $p\in\cP^\downarrow(\cX)\cap\cP_>(\cX)$ and $(T_1,\ldots,T_K) \in C_0(\cX,\cX)^K$.
\end{theorem}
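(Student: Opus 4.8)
The plan is to reduce Theorem~\ref{theorem:extended-uniqueness-of-solution} to the Kruskal-type uniqueness statement \cite[Thm.~1]{ns16} and then carefully package the consequences. First I would recast equation~\eqref{eqn:invertibility-L-L} as an identity between two PARAFAC/CP decompositions of the same order-$K$ tensor $Q\in\cP(\cX^K)$, namely $Q=\sum_x p(x)\,w_1(\cdot|x)\otimes\cdots\otimes w_K(\cdot|x)$. The rank-one terms are built from the columns of the matrices $W_i$ (as vectors indexed by $y_i\in\cX$), weighted by $p(x)$. The hypothesis that each $W_i$ is invertible means each factor matrix has full column rank $|\cX|$, hence Kruskal rank $|\cX|$, and $p\in\cP_>(\cX)$ (implicit once one is in $\mathrm{ran}(\Theta)$ on the correct subset) guarantees no weight vanishes, so the decomposition has rank exactly $L=|\cX|$. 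The essential uniqueness theorem from \cite{ns16} then says any other decomposition of $Q$ into $L$ rank-one stochastic terms agrees with the original up to a permutation of the $L$ terms and a rescaling within each rank-one term; because every factor here must again be a stochastic matrix (columns summing to $1$) and the weights must be a probability vector, the only freedom left is the simultaneous relabelling $x\mapsto\tau(x)$. This gives $V_i=W_i\circ\tau^{-1}$ and $p'=\tau(p)$, and conversely each such tuple manifestly satisfies~\eqref{eqn:invertibility-L-L}, yielding exactly $|\cX|!$ solutions.

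Next I would deduce the stated structural consequences for $\Theta$. By definition $\Theta[(p,T_1,\ldots,T_K)]=\bigl(\bigotimes_{i=1}^K T_i\bigr)\bigl(\sum_i p(i)\delta_i^{\otimes K}\bigr)$ is precisely the tensor $Q$ above, so the fibre $\Theta^{-1}(Q)$ over a point in the range, intersected with the relevant domain, is exactly the orbit $\{(\tau(p),W_1\tau^{-1},\ldots,W_K\tau^{-1}):\tau\in S_\cX\}$. For $p\in\cP^\downarrow(\cX)\cap\cP_>(\cX)$ the entries of $p$ are strictly decreasing only weakly, so a priori several $\tau$ could send $p$ back into $\cP^\downarrow(\cX)$; but one picks out a canonical representative by sorting, and this defines $\Theta'$ on $\mathrm{ran}(\Theta)$ with values in $\cP^\downarrow(\cX)$: given $Q$, solve for any one solution tuple (this is what the DCA minimisation in Definition~\ref{def:DCA-algo} does, since the minimum value is $0$ exactly on $\mathrm{ran}(\Theta)$), read off its probability vector, and sort it decreasingly. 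Since all $|\cX|!$ solutions have probability vectors that are permutations of one another, the sorted vector is well defined and equals $p$ when $p$ was already in $\cP^\downarrow(\cX)$, which is~\eqref{eqn:inversion-property-of-Theta'}.

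The main obstacle I anticipate is not the uniqueness input itself --- that is quoted --- but two bookkeeping points that must be handled with care. First, one must verify that the abstract Kruskal/CP ``essential uniqueness up to permutation and scaling'' really collapses to ``permutation only'' in our normalised setting: the scaling ambiguity along each of the $K$ modes plus the weight must be reconciled with the $K+1$ normalisation constraints (each $V_i$ stochastic, $p'$ a probability vector), and one has to check this is not over- or under-determined --- it works because within a single rank-one term there are $K$ scalar degrees of freedom tied to one weight, i.e.\ $K$ free ratios, exactly matched by requiring each of the $K$ column-sums to be $1$ after absorbing the weight appropriately, forcing all scalars to $1$. Second, the case $K\geq 3$ (rather than $K=2$) is exactly where Kruskal's condition $\sum_{i}k_i\ge 2L+(K-1)$ is met with full-rank factors, so one should note that the argument genuinely needs the third copy; with $K=2$ the decomposition is only unique up to an invertible change of basis, not a permutation, which is the obstruction the theorem's hypothesis $K\ge 3$ sidesteps. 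Beyond these, the remaining steps --- the converse direction, the counting of $|\cX|!$, and the definition of $\Theta'$ by sorting --- are routine.
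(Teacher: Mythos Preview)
The paper does not give its own proof of this theorem: it is explicitly ``stated without proof'' as a reformulation of \cite[Thm.~1]{ns16}. Your plan to reduce \eqref{eqn:invertibility-L-L} to a Kruskal-type uniqueness statement for the CP decomposition of the tensor $Q\in\cP(\cX^K)$, and then to invoke the result of \cite{ns16}, is exactly in line with this, and your bookkeeping on why the scaling ambiguity collapses to a pure permutation in the stochastic setting, and why $K\geq 3$ is needed, is correct and more detailed than anything the paper provides.

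One caveat worth flagging: as you already hint, the count of exactly $|\cX|!$ solutions in the first part of the statement really requires $p\in\cP_>(\cX)$ (otherwise terms drop out and the solution set is larger), even though the hypothesis reads $p\in\cP(\cX)$; the second part of the theorem makes this restriction explicit. This is a wrinkle in the statement rather than in your argument.
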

\end{subsection}

\begin{subsection}{A uniformly universal algorithm}
\label{sec:discussion-of-uniform-universality}
An important distinction that we made in the definitions is
that between a universal and a uniformly universal algorithm.
We will now explain this difference based on the restricted class
of binary symmetric channels (BSCs). Then, we present an algorithm
that is uniformly universal if restricted to dependent component s
ystems allowing only BSCs.

For this class of channels we know the following from \cite[Theorem 3]{na16}:
\begin{theorem}
\label{theorem:DCA_BSC_K_2}
Let $\cX=\cY_1=\cY_2\{0,1\}$ and $K=2$. Let $A,C\in C_0(\cX,\cY_1)$ and $B,D\in C_0(\cX,\cY_2)$be BSCs and $p,r\in\mathcal{P}_>(\cX)$.
The four equations
 \begin{align}\label{eqn:DCA-for-BSC-and-K=2}
 	\sum_{x\in\cX}a(y_1|x)b(y_2|x)p(x)=\sum_{x\in\cX}c(y_1|x)d(y_2|x)r(x)
\end{align}
(one for each choice of $y_1,y_2\in\{0,1\}$) have exactly the following two
families of solutions:
either $A=C$, $B=D$ and $p=r$,
or $A = C\circ\mathbb{F}$, $B=D\circ\mathbb{F}$, $p = \mathbb{F}(r)$.
\end{theorem}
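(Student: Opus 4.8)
### Proof plan for Theorem~\ref{theorem:DCA_BSC_K_2}

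The plan is to view the left-hand side of \eqref{eqn:DCA-for-BSC-and-K=2} as the entries of a $2\times 2$ matrix and to recognize that the equality for all $y_1,y_2\in\{0,1\}$ is exactly a matrix factorization identity. Write $\hat q(y_1,y_2)=\sum_x a(y_1|x)b(y_2|x)p(x)$. Since $A=(a(y_1|x))$, $B=(b(y_2|x))$ are BSCs they are symmetric: $A=\bigl(\begin{smallmatrix}1-\alpha & \alpha\\ \alpha & 1-\alpha\end{smallmatrix}\bigr)$ for some $\alpha\in[0,1]$, $\alpha\neq\tfrac12$ (invertibility), and similarly for $B,C,D$. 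Setting $P=\operatorname{diag}(p(0),p(1))$ and $R=\operatorname{diag}(r(0),r(1))$, the system reads $A\,P\,B^{\mathsf T}=C\,R\,D^{\mathsf T}$, i.e.\ $APB=CRD$ using symmetry. Because $A,B,C,D$ are invertible and $P,R$ are invertible diagonal (as $p,r\in\cP_>$), this is equivalent to $(C^{-1}A)\,P\,(BD^{-1}) = R$. So I would study which invertible matrices $M=C^{-1}A$, $N=BD^{-1}$ have the property that $MPN$ is diagonal and positive with the same entry-sum normalization as $R$.

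First I would record the key structural fact: the set of $2\times2$ BSC matrices is a commutative group under multiplication (all such matrices are simultaneously diagonalized by the Hadamard matrix $H=\tfrac{1}{\sqrt2}\bigl(\begin{smallmatrix}1&1\\1&-1\end{smallmatrix}\bigr)$, with eigenvalues $1$ and $1-2\alpha$), and it is closed under inverses. Hence $M=C^{-1}A$ and $N=BD^{-1}$ are again symmetric $2\times2$ matrices of the form $\bigl(\begin{smallmatrix}s&t\\ t&s\end{smallmatrix}\bigr)$ (not necessarily stochastic), with $s\pm t$ the two eigenvalues in the Hadamard basis, and $s+t=1$ because row sums of BSCs and their inverses are preserved. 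So $M=\bigl(\begin{smallmatrix}s&1-s\\1-s&s\end{smallmatrix}\bigr)$ and $N=\bigl(\begin{smallmatrix}s'&1-s'\\1-s'&s'\end{smallmatrix}\bigr)$ for some reals $s,s'$; the flip $\mathbb F=\bigl(\begin{smallmatrix}0&1\\1&0\end{smallmatrix}\bigr)$ corresponds to $s=0$, the identity to $s=1$.

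Next I would impose the constraint that $M P N$ be diagonal. Expanding the off-diagonal entry of $MPN$ and using $p(0)+p(1)$ arbitrary but $p(0),p(1)>0$, a direct computation gives a bilinear equation in $s,s'$ whose solutions force $s\in\{0,1\}$ and correspondingly $s'\in\{0,1\}$, \emph{with $s=s'$} (the mixed cases $s=0,s'=1$ or vice versa make a diagonal entry vanish or negative, contradicting $r\in\cP_>$). The case $s=s'=1$ gives $M=N=I$, hence $A=C$, $B=D$, and then $R=P$, i.e.\ $p=r$. The case $s=s'=0$ gives $M=N=\mathbb F$, hence $A=C\mathbb F$, $B=D\mathbb F$; and $\mathbb F P\mathbb F=\operatorname{diag}(p(1),p(0))$, so $r=\mathbb F(p)$, equivalently $p=\mathbb F(r)$. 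These are exactly the two claimed families.

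The main obstacle I expect is the off-diagonal vanishing analysis: one must be careful that $p(0),p(1)$ are not free scalars but satisfy $p(0)+p(1)=1$, so the bilinear relation in $s,s'$ involves the unknown ratio $p(0)/p(1)$, and I need to argue the only way it can hold \emph{for the given $p$} while keeping both diagonal entries of $MPN$ strictly positive and normalized to a probability vector is the discrete set $\{s=s'=0,\ s=s'=1\}$. The clean way to do this is to pass to the Hadamard basis throughout: $H$ diagonalizes all the BSC factors, $HPH$ is a fixed symmetric matrix with diagonal $\tfrac12(p(0)+p(1))=\tfrac12$ and off-diagonal $\tfrac12(p(0)-p(1))\neq 0$ unless $p$ is uniform, and in that basis the equation becomes $\mu\,(HPH)\,\nu = HRH$ with $\mu=\operatorname{diag}(1,1-2(C\text{-param}\ominus A\text{-param}))$ etc.; matching the (fixed, nonzero) off-diagonal entry pins down the product of the nontrivial eigenvalues of $M$ and $N$ to be $\pm1$, and combining with the stochasticity/positivity constraints yields exactly $s,s'\in\{0,1\}$ with $s=s'$. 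I would handle the degenerate uniform-$p$ subcase (where the off-diagonal of $HPH$ vanishes) separately, checking that positivity of $r$ and the determinant relation $\det M\det N\det P=\det R$ still exclude the spurious mixed solutions.
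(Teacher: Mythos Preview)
The paper does not give its own proof of this theorem; it is quoted from \cite[Theorem~3]{na16} and stated without proof. So there is no ``paper's approach'' to compare against, and your matrix-factorisation strategy via $APB=CRD$ and $M:=C^{-1}A$, $N:=BD^{-1}$ is a perfectly reasonable way to prove it. For non-uniform $p$ your computation goes through: writing $M=\bigl(\begin{smallmatrix}s&1-s\\1-s&s\end{smallmatrix}\bigr)$, $N=\bigl(\begin{smallmatrix}s'&1-s'\\1-s'&s'\end{smallmatrix}\bigr)$, the two off-diagonal entries of $MPN$ sum to $s+s'-2ss'$ and differ by $(p(0)-p(1))(s-s')$, so their simultaneous vanishing forces $s=s'$ and $s(1-s)=0$, giving exactly $M=N=I$ or $M=N=\mathbb F$ as you claim.

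There is, however, a genuine gap in your plan: the uniform-$p$ subcase cannot be ``handled separately'' because the theorem is actually \emph{false} there. When $p(0)=p(1)=\tfrac12$ the only constraint left is $s+s'=2ss'$, which has a one-parameter family of solutions (e.g.\ $s=2$, $s'=\tfrac23$), and for each such pair one can choose invertible BSCs $A,C$ with $C^{-1}A=M$ and $B,D$ with $BD^{-1}=N$; the resulting $r$ equals $\tfrac12(1,1)$ again, so all hypotheses are met but $(A,B,p)\neq(C,D,r)$ and $(A,B,p)\neq(C\mathbb F,D\mathbb F,\mathbb F r)$. This is precisely what the paper itself demonstrates immediately after stating the theorem (see eq.~\eqref{eqn:BSC-problemfor-K=2} and Fig.~\ref{fig:tetra}), where it remarks that $p=\pi$ is ``the input distribution explicitly excluded in Theorem~\ref{theorem:DCA_BSC_K_2}''. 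In other words, the theorem as transcribed in the paper is missing the hypothesis $p\neq\pi$; your determinant/positivity argument will not rescue the uniform case, and you should instead add that hypothesis and restrict your proof to $p(0)\neq p(1)$, where your argument is complete.
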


Since our goal now is to understand the intricacies of uniform universality,
let us for the moment assume that UDDA is uniformly universal. Taking into
account the content of Theorem \ref{theorem:DCA_BSC_K_2}, this might lead one to
conclude that it is already universal for the set of invertible BSCs
and $K=2$. That this is not the case can be seen as follows:
Let $p\in\cX$ be such that $p(0)=p(1)=1/2$; this is the input distribution
explicitly excluded in Theorem \ref{theorem:DCA_BSC_K_2}.
Then (see the appendix for a proof) we get the output distribution
defined by
\begin{align}\label{eqn:BSC-problemfor-K=2}
q(x_1,x_2)=p(x_1)c(x_2|x)\qquad\forall x_1,x_2\in\{0,1\},
\end{align}
where the number $c(0|0)$ is given by $c(0|0)=1-a(0|0)-b(0|0)+2\cdot a(0|0)\cdot b(0|0)$.
Obviously, there is a large number of channels $A$ and $B$ leading to the
same channel $C$, so that UDDA cannot choose the correct one uniquely.
This is also reflected in Fig.~\ref{fig:tetra}.

\begin{figure}[ht]
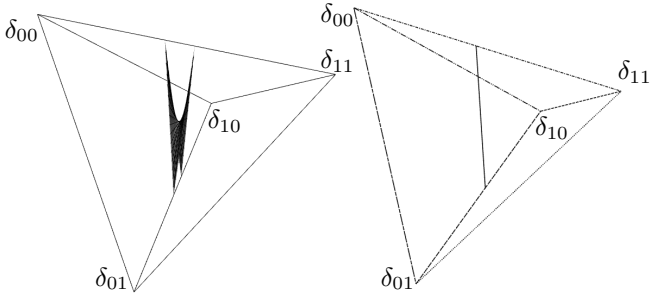

\begin{flushleft}
\begin{tikzpicture}
\node at (-2,0) {\includegraphics[scale=0.34]{picture-small-manifold.png}};
\node at (2,0) {\includegraphics[scale=0.34]{picture-small-manifold-2r=1.png}};
\node (X) at (0.8,-2) {$\delta_{01}$};
\node (X) at (0,0.9) {$\delta_{11}$};
\node (X) at (0,1.5) {$\delta_{00}$};
\node (X) at (-1.5,0.1) {$\delta_{10}$};
\node (X) at (-4.2,1.3) {$\delta_{00}$};
\node (X) at (-3,-2) {$\delta_{01}$};
\node (X) at (2.85,0) {$\delta_{10}$};
\node (X) at (3.95,0.75) {$\delta_{11}$};
\end{tikzpicture}
\end{flushleft}
\caption{\emph{Output distributions $q$ parameterised by the BSC parameters $b_1(0|0)$
and $b_2(0|0)$ for $p(1)=0.45$ (left) and $p(1)=1/2$ (right).
The geometric shape of the set of output distributions already hints at
the fact that it is not possible to retrieve $b_1(0|0)$ and $b_2(0|0)$ from $q$.
Cf.~\cite[Fig.~3]{ns16}.}}
\label{fig:tetra}
\end{figure}

Fortunately, it is enough to go to $K=3$ and the following still holds.

\begin{theorem}
\label{theorem:main-BSC}
Let $K\geq3$ and $d$ be a distortion measure on $\cX\times\cX$.
Let $\mathbf{B}=(B_1,\ldots,B_K)$ be taken from the set $\{B:B$ is a $BSC$ with $b(0|0)\neq1/2\}^K$, and $p\in\mathcal{P}(\cX)$.
The uniform resolution of $\mathbf{W}$ with respect to $p$ is
\begin{align}
  \overline{\mathcal R}(p,\mathbf{W}) = d_{\mathrm{MCA}}(p,\mathbf W).
\end{align}
There is an algorithm that is uniformly universal on the set of
invertible binary symmetric
channels $\left\{B : B \text{ BSC with } b(0|0) \neq \frac12 \right\}$.
\end{theorem}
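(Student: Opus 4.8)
The plan is to follow the template already established by Theorem~\ref{theorem:main}, but upgrading each estimate to be uniform in $p$. The converse bound $\overline{\mathcal R}(p,\mathbf W) \geq d_{\mathrm{MCA}}(p,\mathbf W)$ requires no new ideas: even a clairvoyant decoder who is told the true tuple $(p,\mathbf B)$ cannot beat the minimal clairvoyant ambiguous distortion on a typical input of type $p$, and this lower bound is already uniform because it is information-theoretic and depends only on $(p,\mathbf B)$. So the content is the achievability direction: exhibit an algorithm attaining resolution $d_{\mathrm{MCA}}(p,\mathbf W)$ with a threshold $N(\mathbf B,\eps)$ that does not depend on $p$. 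I would use the same three-phase structure as UDDA — (i) form the empirical distribution $\hat q$ of the columns $(y_{i1},\dots,y_{iK})$; (ii) invert the map $\Theta$ of Theorem~\ref{theorem:extended-uniqueness-of-solution} to recover an estimate $(r,V_1,\dots,V_K)$ of $(p,\mathbf B)$ up to a relabelling permutation; (iii) apply the MCA decoder for $(r,\mathbf V)$ pixelwise — but exploit the rigid BSC parametrisation to make the inversion in step (ii) quantitatively stable.

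The technical heart is a \emph{uniform} inversion estimate for $\Theta$ restricted to BSCs with $b(0|0)\neq\tfrac12$. Concretely, I would prove: for every $\eta>0$ there is $\delta>0$ (depending on $\eta$ and on a lower bound $\gamma$ for $\min_j|b_j(0|0)-\tfrac12|$, but \textbf{not} on $p$) such that if $\|\hat q - \Theta[(p,\mathbf B)]\| < \delta$ then the DCA minimiser $(r,\mathbf V)$ satisfies $\|(r,\mathbf V)-(\tau(p),\mathbf B\circ\tau^{-1})\|_{DCS} < \eta$ for some permutation $\tau$. With $K=3$ and binary alphabet, $\Theta[(p,\mathbf B)]$ is an explicit low-degree polynomial (degree $K$) in the three parameters $b_1,b_2,b_3$ and in $p(0)$; Theorem~\ref{theorem:DCA_BSC_K_2} and its $K\geq3$ strengthening tell us the fibres of $\Theta$ over its range are exactly the $|\cX|!=2$ relabellings. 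The key improvement over $K=2$ is that at $K=3$ this remains true \emph{including} the symmetric distribution $p(0)=p(1)=\tfrac12$ — this is precisely the content of Theorem~\ref{theorem:main-BSC}'s hypothesis surviving the bad case in Fig.~\ref{fig:tetra} — so the inverse map $\Theta'$ extends continuously to all of $\cP(\cX)$, and on the compact parameter region $\{|b_j(0|0)-\tfrac12|\geq\gamma\}$ it is uniformly continuous. That uniform continuity is exactly the $p$-free modulus I need. The main obstacle I anticipate is checking that the DCA minimiser does not get trapped near a \emph{spurious} near-solution when $p$ is close to uniform: one must rule out that a degenerate tuple $(r,\mathbf V)$ with $r$ near $(\tfrac12,\tfrac12)$ and $\mathbf V$ far from $\mathbf B$ gives $\|\Theta[(r,\mathbf V)]-\hat q\|$ competitive with the true tuple. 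This is where the hypothesis $b(0|0)\neq\tfrac12$ is essential (it bounds the relevant Jacobian away from zero), and a compactness argument on the parameter space should close the gap.

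Once the uniform inversion estimate is in hand, the rest is routine. By Sanov/type-counting the empirical distribution $\hat q$ of the $n$ columns — which are i.i.d.\ with law $\Theta[(p,\mathbf B)]$ — satisfies $\mathbb P(\|\hat q - \Theta[(p,\mathbf B)]\|\geq\delta) \leq \mathrm{poly}(n)\,2^{-nc\delta^2}$ with $c$ an absolute constant, so a threshold $N(\mathbf B,\eps)$ can be chosen from $\delta=\delta(\eta,\gamma)$ and $\eps$ alone, independently of $p$. On the good event we have $(r,\mathbf V)$ within $\eta$ of a relabelling of $(p,\mathbf B)$; choosing $\eta$ small enough that the MCA decoder for $(r,\mathbf V)$ coincides with an MCA decoder for $(p,\mathbf B)$ — the set of optimal partitions $\mathcal T$ is stable under small perturbations of the tuple, again by a compactness/finiteness argument, since there are only finitely many partitions $\mathcal T$ of $\cY$ and the objective in $d_{\mathrm{MCA}}$ is continuous — and then invoking the law of large numbers for the pixelwise distortion (the columns being i.i.d., the empirical distortion of the decoder concentrates around its expectation $d_{\mathrm{MCA}}(p,\mathbf B)$), we get $\mathbb P\bigl(\min_\tau d(C_n(\cdot),\tau^{\otimes n}(x^n)) > d_{\mathrm{MCA}}(p,\mathbf B)+\eps'\bigr) < \eps$ for all $n\geq N(\mathbf B,\eps)$ and all $x^n$ of type converging to $p$. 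Since $\eps'$ was arbitrary this yields $\overline{\mathcal R}(p,\mathbf B)\leq d_{\mathrm{MCA}}(p,\mathbf B)$, and combined with the converse the theorem follows.
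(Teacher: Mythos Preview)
Your plan takes a genuinely different route from the paper. You propose to run plain UDDA and argue that the inverse of $\Theta$ is \emph{uniformly} continuous over all $p\in\cP(\cX)$ (once the BSC parameters are bounded away from $\tfrac12$), via a compactness argument. The paper does \emph{not} do this: it abandons the DCA optimisation altogether for the BSC case and instead uses the tailor-made algorithm BUDDA, whose estimation phase rests on the explicit closed-form estimator $E_K$ of Theorem~\ref{theorem:basis-for-UDDA}. The uniform-in-$p$ control is then obtained by a branching trick: one computes $\hat p=E_K(\hat q)$ directly, but when $\hat p$ is too close to $\pi$ (so that inverting $q_i=B_i(p)$ for $b_i$ is unstable), one conditions on the first output $Y_1=0$ and works with the resulting $(K-1)$-output dependent component system, whose hidden distribution $\mathbb P(X=0\mid Y_1=0)=p(0)b_1/(p(0)b_1+(1-p(0))(1-b_1))$ equals $b_1\neq\tfrac12$ precisely at $p=\pi$. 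Thus at least one of the two branches always has its source distribution bounded away from $\pi$, and the paper shows both branches succeed with probability $\geq 1-(K+4)\nu(n)$ \emph{independently of $x^n$}. Your approach would be more elegant if it worked, but it sidesteps this constructive idea, which is the paper's main contribution here.

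There is also a gap in your compactness argument. You need that whenever $\Theta[(r,\mathbf V)]=\Theta[(p,\mathbf B)]$ with $p\in\cP(\cX)$ arbitrary and $\mathbf V$ ranging over \emph{all} of $C(\cX,\cX)^K$ (since that is the DCA search space), $(r,\mathbf V)$ must be a relabelling of $(p,\mathbf B)$. Theorem~\ref{theorem:extended-uniqueness-of-solution} gives this for $p\in\cP_>(\cX)$, but at the boundary $p=\delta_0$ it fails badly: $\Theta[(\delta_0,\mathbf B)]=\bigotimes_j b_j(\cdot\,|0)$ is a product, and any $(r,\mathbf V)$ with $r=\delta_0$ and $v_j(\cdot\,|0)=b_j(\cdot\,|0)$ (second row arbitrary), or even $r$ interior with all $V_j$ constant channels, reproduces it. Hence the DCA inverse is genuinely multi-valued on the closure, and your uniform modulus $\delta(\eta,\gamma)$ cannot be obtained from compactness alone. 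You would need a separate argument near $\partial\cP(\cX)$ showing that although $(r,\mathbf V)$ may be far from $(p,\mathbf B)$ in $\|\cdot\|_{DCS}$, the MCA decoder it induces still achieves distortion close to $d_{\mathrm{MCA}}(p,\mathbf B)$; this is not automatic (a constant channel estimate yields the trivial decoder with distortion $\approx p(1)d(0,1)$, which can be much larger than $d_{\mathrm{MCA}}$). Also, your appeal to ``Theorem~\ref{theorem:main-BSC}'s hypothesis surviving the bad case in Fig.~\ref{fig:tetra}'' is circular --- that is the theorem you are proving. Finally, the claim that one can choose $\eta$ so small that the MCA partition literally \emph{coincides} with the optimal one is false at ties; you only get (and only need) Lemma~\ref{lemma:continuity-of-OA-recovery}, i.e.\ that the near-optimal decoder's distortion is close to $d_{\mathrm{MCA}}$.
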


The intuition behind the proof of this result is inspired by the
following observation: If $K=3$ then the distribution of $X$ given
$X_1$ is not uniform anymore. Thus, conditioned on the outcome of
the first observed variable being e.g. a $0$, we can use the content
of Theorem \ref{theorem:DCA_BSC_K_2} to determine $b_2(0|0)$,
$b_3(0|0)$ and $B_1(p)$ (note that $B_1(p)=p$ holds).

When treating BSCs, it turns out that a universal algorithm
(called BUDDA in the following) may be defined based on analytical
formulas rather than outcomes of an optimisation procedure such as
in the definition of UDDA. The core of this approach is the content of the following theorem.

\begin{theorem}
\label{theorem:basis-for-UDDA}
Let $\cX=\{0,1\}$ and $K\geq2$. There exists a function
$f_K:\mathcal P(\cX^K)\to[0,1]$ such that the following holds:
If $p\in\mathcal P_>(\cX)$, $B_1,\ldots,B_K\in C(\cX,\cX)$ are
binary symmetric channels and $q\in\mathcal P(\cX^K)$ is the
output distribution of the dependent component system
$(p,B_1,\ldots,B_K)$ and $q_1,\ldots,q_K\in\mathcal P(\cX)$
denote the marginals of $q$ on the respective alphabets, then
\begin{align}
  p(0) &=\frac{1}{2}\left(1+\frac{|\prod_{i=1}^K(1-2\cdot q_i(0))|^{1/K}}{f_K(q)}\right).
\end{align}
The mapping $q\mapsto p(0)$ is henceforth denoted $E_K$.
\end{theorem}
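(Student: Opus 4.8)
The plan is to exhibit $f_K$ explicitly as a function of the pairwise correlations of $q$ and then verify the displayed formula by a direct computation. Write $\pi := p(0)$ and, for each $i$, $\beta_i := b_i(0|0) = b_i(1|1)$, so that $b_i(1|0) = b_i(0|1) = 1-\beta_i$. Computing the $i$-th marginal of $q$ gives $q_i(0) = \pi\beta_i + (1-\pi)(1-\beta_i)$, and a one-line rearrangement yields the key factorisation
\[
  1 - 2q_i(0) \;=\; (1-2\pi)(2\beta_i-1) \;=\; \bigl(1 - 2p(0)\bigr)\bigl(2b_i(0|0)-1\bigr) .
\]
Consequently $\bigl|\prod_{i=1}^K(1-2q_i(0))\bigr|^{1/K} = |1-2\pi|\,\prod_{i=1}^K|2\beta_i-1|^{1/K}$, so the whole statement reduces to recovering the single quantity $\prod_{i=1}^K|2\beta_i-1|^{1/K}$ from $q$ alone: once $f_K(q)$ equals this, the bracketed fraction in the theorem is exactly $|1-2\pi|$ and the right-hand side becomes $\tfrac12\bigl(1+|2p(0)-1|\bigr)$.

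To isolate $\prod_i|2\beta_i-1|$ from $q$ I would use second-order Walsh (correlation) coefficients. For $i\neq j$ put $\widehat q(i,j) := \sum_{\mathbf y\in\cX^K} q(\mathbf y)(-1)^{y_i+y_j}$, which is visibly a function of $q$. Conditioning on the unobserved input $x$, the output coordinates are independent and $\sum_y b_i(y|x)(-1)^y = (-1)^x(2\beta_i-1)$; marginalising over all coordinates other than $i,j$ gives $\mathbb E\bigl[(-1)^{Y_i+Y_j}\mid X=x\bigr] = (2\beta_i-1)(2\beta_j-1)$ for \emph{both} values of $x$, hence
\[
  \widehat q(i,j) \;=\; (2\beta_i-1)(2\beta_j-1).
\]
I therefore define
\[
  f_K(q) \;:=\; \Bigl(\,\prod_{1\le i<j\le K}\bigl|\widehat q(i,j)\bigr|\Bigr)^{\!1/(K(K-1))} .
\]
Since $|\widehat q(i,j)|\le\sum_{\mathbf y}q(\mathbf y)=1$ for every $q\in\cP(\cX^K)$, this indeed defines $f_K\colon\cP(\cX^K)\to[0,1]$; and when $q$ is the output distribution of a dependent component system as in the theorem, $\prod_{i<j}|\widehat q(i,j)| = \prod_i|2\beta_i-1|^{K-1}$ because each index occurs in exactly $K-1$ pairs, so $f_K(q) = \prod_i|2\beta_i-1|^{1/K}$ — exactly the factor needed. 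Substituting back, the bracketed fraction equals $|1-2p(0)|$ and the right-hand side is $\tfrac12\bigl(1+|2p(0)-1|\bigr) = \max\{p(0),1-p(0)\}$, which is $p(0)$ precisely when $p(0)\ge\tfrac12$.

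The main point requiring care is the degenerate regime: if some $\beta_i=\tfrac12$ then $B_i$ is non-invertible, $f_K(q)=0$, and the fraction in the statement is $0/0$; this is ruled out exactly by the invertibility hypothesis used elsewhere in the paper (Theorem~\ref{theorem:main-BSC}), i.e. $b_i(0|0)\neq\tfrac12$, and then $f_K(q)>0$ and every step above is legitimate. The other subtlety is purely one of labelling: the construction recovers $\max\{p(0),p(1)\}$, which coincides with $p(0)$ on $\cP^\downarrow(\cX)$, and for $p$ with $p(0)<\tfrac12$ the swap $0\leftrightarrow1$ (the flip $\mathbb F$ of Theorem~\ref{theorem:DCA_BSC_K_2}) reduces to that case — this is the unavoidable permutation ambiguity, with $E_K(q)$ read as the common value. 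Apart from these points I expect no real obstacle; the remainder is just the bookkeeping of the marginal identity and the conditional-independence computation of $\widehat q(i,j)$, and one could equally well use, e.g., the cyclic product $\widehat q(1,2)\widehat q(2,3)\cdots\widehat q(K,1)$ in place of the all-pairs product in the definition of $f_K$.
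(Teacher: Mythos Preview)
Your argument is correct. The marginal identity $1-2q_i(0)=(1-2\pi)(2\beta_i-1)$ and the computation $\widehat q(i,j)=(2\beta_i-1)(2\beta_j-1)$ are both right, and the counting argument showing that the all-pairs product equals $\prod_i|2\beta_i-1|^{K-1}$ is clean. You also correctly flag the two genuine subtleties (the degeneracy at $\beta_i=\tfrac12$ and the fact that the formula returns $\max\{p(0),p(1)\}$ rather than $p(0)$ itself).

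Your route differs from the paper's. The paper builds $f_K$ from the \emph{top} Walsh coefficient: for even $K$ it sets $f_K(q)=\sum_{\mathbf y}q(\mathbf y)(-1)^{\sum_j y_j}$ and shows by a direct calculation that this equals $\prod_i(2\beta_i-1)$, independently of $p$; for odd $K$ it then defines $f_K$ as the geometric mean of the $f_{K-1}$'s applied to the $K$ marginals $q_{\neq i}$. You instead work entirely with second-order coefficients $\widehat q(i,j)$, which already eliminate $p$ because the parity is even, and take a geometric mean over all pairs. The underlying mechanism is the same---even-order Walsh coefficients of $q$ do not see $p$---but your construction is more uniform: it needs no parity case distinction, works for every $K\geq2$ with a single formula, and lands in $[0,1]$ by construction (the paper's signed $f_K$ for even $K$ actually ranges in $[-1,1]$, which sits uneasily with the stated codomain). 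What the paper's version buys is a single global statistic with an appealing online update rule, eq.~\eqref{defn:empirical-DMN}; your pairwise version trades that for conceptual simplicity and a parity-free definition. Either choice proves the theorem.
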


The explicit nature of Theorem \ref{theorem:basis-for-UDDA}
(the definition of $f_K$ is the content of its proof)
allows us to develop error bounds in the determination of $p$,
when instead of the correct $q$ an estimate $\hat q$ is used for
the calculation. This in turn makes it possible to prove
Theorem \ref{theorem:main-BSC} based on the following definition of BUDDA.

\begin{algorithm}{BSC-UDDA}
To any pair $r,s\in\mathcal P(\{0,1\}$ such that
$r^\downarrow(0)<s^\downarrow(0)$, we let $b(r,s):=(s+r-1)/(2s-1)$
be the parameter such that a BSC $B$ with $b(0|0)=b(r,s)$ satisfies $B(s)=r$.

The following steps produce the desired estimate $\hat x^n$ for $x^n$ when $K\geq3$:
\begin{enumerate}
\item Input is $(y_{ij})_{i,j=1}^{n,K}$.
\item Calculate $\hat q(\cdot):=\tfrac{1}{n}N(\cdot|(\by_1,\ldots,\by_K)) \in \cP(\cX^K)$.
\item Calculate $\hat p=E_K(\hat q)$.
\item Let $i\in\{0,1\}$ be the number such that $N(y^n_1|i)>N(y^n_1|i\oplus1)$ where $\oplus$ is addition modulo two. Without loss of generality we assume in the following description that $i=0$. Calculate the conditional distribution $\hat q(\cdot|y_1=0)\in\mathcal P(\cX^{K-1})$.
\item If $\|\hat p-\pi\|>\min_i\|\hat q_i(\cdot|y_1=0)-\pi\|$, set $\hat b_i(0|0):=b(\hat q_i,\hat p)$.
\item If $\|\hat p-\pi\|<\min_i\|\hat q_i(\cdot|y_1=0)-\pi\|$, calculate $\hat p_1:=E_{K-1}(\hat q(\cdot|y_1=0)$ and $\hat p_2:=E_{K-1}(\hat q(\cdot|y_2=0)$. Then, set $\hat b_i(0|0):=b(\hat q_i(\cdot|y_1=0),\hat p_1)$ for $i=2,\ldots,K$ and $\hat b_1(0|0)=b(\hat q_i(\cdot|y_2=0),\hat p_2$.
\item Find the minimal clairvoyant ambiguous recovery $\mathcal T$ for $(\hat p,\hat B_1,\ldots,\hat B_K)$.
\item For each $i\in[n]$, apply the MCA decoder to the sequence $(y_{i1},\ldots,y_{iK})$.
      This generates a sequence $\hat x^n=\hat x_1 \ldots \hat x_n$.
\item Output $\hat x^n$.
\end{enumerate}
\end{algorithm}
As explained earlier, the fifth and sixth step in the algorithm allows one to reliably deal with those situations where $p\approx\pi$. Note that either of the two procedures may fail: the one in step five if $p\approx\pi$, the one in step six if $p(0)=1-a_1$. Luckily however, in the limit of $n\to\infty$ it is not possible that both conditions are fulfilled.
\begin{figure}[ht]
\begin{center}
\begin{tikzpicture}
\node (X) at (-4.2,6.5) {$X$};
\node (Y_1) at (-7,6) {$Y_1$};
\node (Y_2) at (-5.3,4.7) {$Y_2$};
\node (Y_3) at (-2,6) {$Y_3$};
\node (Xpr) at (-4.3,6.6) {};
\node (Y_1pr) at (-7,6.15) {};
\node (Y_2pr) at (-5.4,4.8) {};
\node (Y_3pr) at (-2,6.15) {};
\path[->, line width=1pt, shorten <= -1pt] (X) edge (Y_1);
\path[->, line width=1pt, shorten <= -1pt] (X) edge (Y_2);
\path[->, line width=1pt, shorten <= -1pt] (X) edge (Y_3);
\path[->, dashed, line width=1pt, shorten <= 4pt] (Y_1pr) edge (Xpr);
\path[->, dashed, line width=1pt, shorten <= +5pt, shorten >= +4pt] (Xpr) edge (Y_2pr);
\path[->, dashed,line width=1pt, shorten <= 7pt, shorten >= 3pt] (Xpr) edge (Y_3pr);
\end{tikzpicture}
\caption{\emph{Dependent component system with three outputs. Solid arrows denote information flow in the original model. Dashed arrows denote information flow when instead of the ''hidden variable`` $X$ the first visible alphabet $Y_1$ is treated as input to a dependent component system which then has only two outputs, $Y_2,Y_3$.}}
\end{center}
\end{figure}
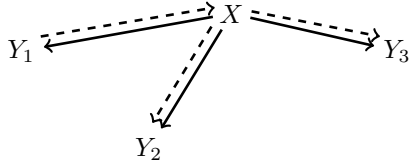
The fact that one output of the system is used to create a bias explains why at least three copies are needed.\\
This is especially important in the light of findings such as \cite{tomasi-bro2006} which demonstrate the instability of contemporary algorithms for fitting of tensor models.
\end{subsection}
\end{section}

\begin{section}{Simulation Results}
\label{sec:simulations}

Here we present simulations with a sample picture (a part of
Leonardo da Vinci's Mona Lisa, Fig.~\ref{fig:monalisa}), distorted
by ten different BSCs.

\begin{figure}[ht]
\begin{center}
\includegraphics[scale=0.2]{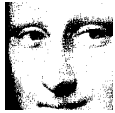}
\end{center}
\caption{\emph{The original image, a $200\times200$ pixels black-and-white picture
         of a part of Leonardo da Vinci's Mona Lisa.}}
\label{fig:monalisa}
\end{figure}

The parameters of the ten BSCs are chosen as
\begin{align*}
  (0.71, 0.32, 0.41, 0.49, 0.48, 0.82, 0.81, 0.51, 0.84, 0.17).
\end{align*}
The distorted images are shown in Fig.~\ref{fig:monalisa-noisy}.

\begin{figure}[ht]
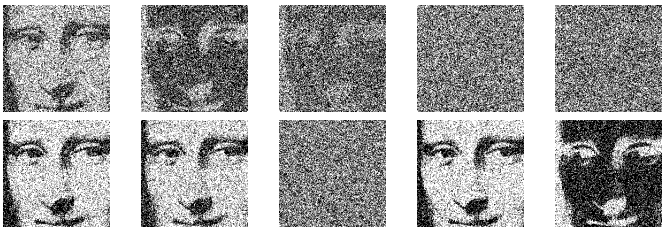

\begin{center}
\begin{tabular}{lllll}
\includegraphics[scale=0.2]{ML200Noisy1}&\includegraphics[scale=0.2]{ML200Noisy2}&\includegraphics[scale=0.2]{ML200Noisy3}&\includegraphics[scale=0.2]{ML200Noisy4}&\includegraphics[scale=0.2]{ML200Noisy5}\\
\includegraphics[scale=0.2]{ML200Noisy6}&\includegraphics[scale=0.2]{ML200Noisy7}&\includegraphics[scale=0.2]{ML200Noisy8}&\includegraphics[scale=0.2]{ML200Noisy9}&\includegraphics[scale=0.2]{ML200Noisy10}
\end{tabular}
\end{center}
\caption{\emph{The ten distorted versions of the original.}}
\label{fig:monalisa-noisy}
\end{figure}

Using $K=7$ if the noisy copies,
the channel parameters estimated by BUDDA are
\begin{align*}
(0.28, 0.67, 0.59, 0.51, 0.52, 0.17, 0.18),
\end{align*}
and the algorithm returns the denoised image shown in Fig.~\ref{fig:denoise-K7}.

\begin{figure}[ht]
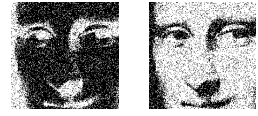

\begin{center}
\begin{tabular}{ll}
\includegraphics[scale=0.2]{K7WB.png}&\includegraphics[scale=0.2]{K7BW.png}
\end{tabular}
\end{center}
\caption{\emph{$K=7$. Left: denoised image; right: denoised and inverted for easier comparison with the original.}}
\label{fig:denoise-K7}
\end{figure}

Now we proceed to use all our $K=10$ copies.
Then, BUDDA produces the following estimates for the channels:
\begin{align*}
  (0.29, 0.68, 0.59, 0.51, 0.52, 0.17, 0.19, 0.49, 0.16, 0.83),
\end{align*}
and the denoised image shown in Fig.~\ref{fig:denoise-K10}.

\begin{figure}[ht]
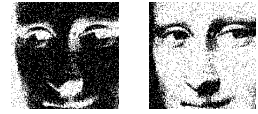

\begin{center}
\begin{tabular}{ll}
\includegraphics[scale=0.2]{K10WB.png}& \includegraphics[scale=0.2]{K10BW.png}
\end{tabular}
\end{center}
\caption{\emph{$K=10$.
Left: denoised image; right: denoised and inverted for easier comparison with the original.}}
\label{fig:denoise-K10}
\end{figure}
\end{section}

\begin{section}{Proofs}

\begin{proof}[Proof of Theorem \ref{theorem:main}]
Let the empirical distribution of the output pixels produced in step 2 of the algorithm
be $\hat q\in\mathcal P(\cX^K)$ and $\eps>0$.
Application of the Chernoff-Hoeffding bound \cite{Hoeffding} yields
\begin{align}
  \mathbb P\left(\left\|\hat q - \left(\bigotimes_{j=1}^K W_j\right)p^{(K)} \right\| \leq \eps\right)
                                                                         \geq 1 - 2^{L^K} 2^{-2n\eps^2}.
\end{align}
This implies that step 2 of the algorithm produces a distribution which is, with arbitrarily high
accuracy and probability converging to $1$ as $n$ tends to infinity and
independently of the type of $x^n$, close to the distribution $\left(\bigotimes_{j=1}^K W_j\right)p^{(K)}$. Based on this estimate, we know
by
simple application of the triangle inequality that the function
$E:\mathcal P(\cX^K)\to\mathcal P^\downarrow(\cX^K)$ defined by
\begin{align}
  E(q) := \left(\bigotimes_{j=1}^K V_j\right)r^{(K)},
\end{align}
where (note that this next step is the implementation of the DCA algorithm)
\begin{align}
  (r,\mathbf V) = \operatorname*{arg\,min}_{(r^\downarrow,\mathbf V)}
                              \left\| \left( \bigotimes_{j=1}^K V_j \right)r^{\downarrow(K)}-q \right\|,
\end{align}
satisfies
\begin{align}
  \mathbb P\left(\left\| E(\hat q) - \left(\bigotimes_{j=1}^K W_j\right)p^{(K)} \right \|_1\leq 2\eps\right)
                                                                              \geq 1 - 2^{L^K} 2^{-2n\eps^2}.
\end{align}
As a consequence, we get
\begin{align}\label{eqn:above-observation}
  \mathbb P\left( \left\| E(\hat q) - \left(\bigotimes_{j=1}^K W_j\right)p^{(K)} \right \|_1
                                   \leq n^{-\eps/2}\right) \longrightarrow 1
\end{align}
as $n\to\infty$.
Take the function $\Theta''$ from Ref.~\cite[Theorem 1]{ns16},
which for $r\in\mathcal P_>(\cX)$ and channels $V_1,\ldots,V_K$
with invertible transition probability matrices is defined as
a one-sided inverse to $\Theta$:
\begin{align}
  \Theta''\left(\left(\bigotimes_{j=1}^K V_j\right)r^{(K)}\right) := (r^\downarrow,\mathbf V\circ\tau),
\end{align}
with $\tau\in S_\cX$ being a permutation with the property
$r(x)=r^\downarrow(\tau(x))$ for all $x\in\cX$, and $r^\downarrow \in \cP^\downarrow(\cX)$.
According to Lemma \ref{lemma:continuity-of-inverse} below,
$\Theta''$ is continuous if restricted to the set $P_>(\cX)$.
Based on the above observation, see in particular eq.~\eqref{eqn:above-observation},
$\Theta''$ satisfies
\begin{align*}
\mathbb P\bigl( \|\Theta''\circ E(\hat q)-(p,\mathbf W)\|_{DCS}
                      \leq\delta_{DCS}(p,\mathbf W,\eps_n) \bigr) \longrightarrow 1
\end{align*}
as $n\longrightarrow\infty$.
Thus, with probability approaching $1$ when $n$ tends to infinity and the image types
$\overline{N}(\cdot|x^n)$ are in a small enough vicinity of some distribution
$s'$, we know that our estimate of the system $(p,\mathbf W)$ will be accurate up to
an unknown permutation, and small deviations incurred by the properties of
$\delta_{DCS}(p,\mathbf W,\eps)$.
According to Lemma \ref{lemma:continuity-of-OA-recovery}, the minimum
clairvoyant ambiguous recovery computed based on the estimated system
$\Theta''\circ E(\hat q)$ is asymptotically optimal for the true system
$(p,\mathbf W)$. Using the Chernoff-Hoeffding inequality \cite{Hoeffding},
we can prove that application of the MCA decoder separately to each
letter $y_{i1},\ldots,y_{iK}$ yields a sequence $\hat x^n$ with the
property $n^{-1}\sum_{i=1}^n d(\tau(x_i),\hat x_i) < D$ with high probability
for every $D>d_{MCA}(p,\mathbf W)$.
Thus a look at \eqref{defn:distortion-measure} finishes the proof.
\end{proof}

We now state and prove the lemmas required above.

\begin{lemma}
\label{lemma:continuity-of-Theta}
Set $\Theta(s,\mathbf W) := \left(\bigotimes_{j=1}^K W_j\right)s^{(K)}$.
Then for every two dependent component systems $(r,\mathbf V)$ and
$(s,\mathbf W)$, we have
\begin{align}
  \left\|\Theta(r,\mathbf V)-\Theta(s,\mathbf W)\right\|
         \leq \left\|(r,\mathbf V)-(s,\mathbf W) \right\|_{DCS} \! .
\end{align}
That is, $\Theta$ is continuous with respect to $\|\cdot\|_{DCS}$.
\end{lemma}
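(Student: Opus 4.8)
The plan is to compute the left-hand side in coordinates and then interpolate. Unfolding the diagonal and tensor structure, $\Theta(s,\mathbf W)$ is the distribution on $\cX^K$ with $\Theta(s,\mathbf W)(y_1,\ldots,y_K)=\sum_{x\in\cX}s(x)\prod_{j=1}^K w_j(y_j|x)$, and likewise for $(r,\mathbf V)$. I would then pass from $(r,\mathbf V)$ to $(s,\mathbf W)$ in $K+1$ stages: first change the input distribution from $r$ to $s$, then replace $V_1,\ldots,V_K$ by $W_1,\ldots,W_K$ one channel at a time, estimating each stage separately and assembling the result by the triangle inequality. Throughout I use $\|\cdot\|=\|\cdot\|_1$ as elsewhere in the paper.

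For the distribution stage, $\|\Theta(r,\mathbf V)-\Theta(s,\mathbf V)\|\le\sum_{\mathbf y}\sum_x|r(x)-s(x)|\prod_jv_j(y_j|x)$, and since $\sum_{\mathbf y}\prod_jv_j(y_j|x)=\prod_j\sum_{y_j}v_j(y_j|x)=1$ this collapses to $\|r-s\|_1$. For the $\ell$-th channel stage, introduce the hybrid system $H_\ell:=\Theta(s,W_1,\ldots,W_\ell,V_{\ell+1},\ldots,V_K)$, so $H_0=\Theta(s,\mathbf V)$ and $H_K=\Theta(s,\mathbf W)$; the $(y_1,\ldots,y_K)$-entry of $H_{\ell-1}-H_\ell$ is $\sum_x s(x)\bigl(\prod_{j<\ell}w_j(y_j|x)\bigr)\bigl(v_\ell(y_\ell|x)-w_\ell(y_\ell|x)\bigr)\bigl(\prod_{j>\ell}v_j(y_j|x)\bigr)$. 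Taking absolute values, summing over $\mathbf y$, and factoring the $|\cX|^K$-fold sum coordinatewise, the sums over $y_j$ with $j\ne\ell$ all equal $1$ by stochasticity while $\sum_{y_\ell}|v_\ell(y_\ell|x)-w_\ell(y_\ell|x)|\le\|V_\ell-W_\ell\|_{FB}$ uniformly in $x$, so after averaging out $s$ we get $\|H_{\ell-1}-H_\ell\|\le\|V_\ell-W_\ell\|_{FB}$. Summing the $K+1$ contributions yields exactly $\|r-s\|_1+\sum_{j=1}^K\|V_j-W_j\|_{FB}=\|(r,\mathbf V)-(s,\mathbf W)\|_{DCS}$, which is the claim.

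This argument is entirely routine and there is no genuine obstacle; the only point deserving attention is the coordinatewise factorisation of the sum over $\mathbf y$ in the channel stage, which is what makes each swap cost precisely the corresponding $\|\cdot\|_{FB}$-distance and no more. In particular it matters that $\|\cdot\|_{FB}$ is defined as the \emph{maximum} over input letters, so that the outer convex combination over $x$ with weights $s(x)$ can be bounded termwise and then collapsed to $1$.
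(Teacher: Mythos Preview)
Your argument is correct and is essentially the same telescoping/hybrid approach the paper uses: the paper packages it as the matrix identity $A_1\cdots A_{K+1}-B_1\cdots B_{K+1}=\sum_i A_1\cdots A_{i-1}(A_i-B_i)B_{i+1}\cdots B_{K+1}$ with the distribution placed in the $(K{+}1)$-st slot, whereas you swap the distribution first and then the channels, but the per-step bounds and their assembly via the triangle inequality are identical.
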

\begin{proof}
We use a telescope sum identity: for arbitrary matrices
$A_1,\ldots,A_{K+1}$ and $B_1,\ldots,B_{K+1}$ we have
\begin{equation}\begin{split}
  \label{eqn:telescope-identity}
  A_1&\cdot\ldots\cdot A_{K+1}-B_1\cdot\ldots\cdot B_{K+1} \\
     &=\sum_{i=1}^{K+1}A_1\cdot\ldots\cdot A_{i-1}\cdot(A_i-B_i)B_{i+1}\cdot\ldots\cdot B_{K+1}.
\end{split}\end{equation}
We apply it to the channel matrices and input distributions as follows.
Let $A_i:=\eins^{\otimes (i-1)}\otimes V_i\otimes\eins^{\otimes(n-i)}$,
$B_i:=\eins^{\otimes (i-1)}\otimes W_i\otimes\eins^{\otimes(n-i)}$ for $i\in[K]$;
furthermore $A_{K+1}:=r^{(K)}$, $B_{K+1}:=s^{(K)}$, where we interpret
probability distributions on an alphabet as channels from a set with
one element.
Now we have
\[\begin{split}
  \bigl\|&(\otimes_{i=1}^KV_i)r^{(K)}-(\otimes_{i=1}^KW_i)s^{(K)}\bigr\| \\
         &=     \bigl\| A_1\cdot\ldots\cdot A_{K+1}-B_1\cdot\ldots\cdot B_{K+1} \bigr\| \\
         &=     \left\| \sum_{i=1}^{K+1} A_1\cdots A_{i-1}(A_i-B_i)B_{i+1}\cdots B_{K+1} \right\|_{FB} \\
         &\leq  \sum_{i=1}^{K+1} \bigl\| A_1\cdots A_{i-1}(A_i-B_i)B_{i+1}\cdots B_{K+1} \bigr\|_{FB} \\
         &=     \sum_{i=1}^{K+1} \bigl\| V_1\otimes\cdots\otimes V_{i-1}\otimes(V_i-W_i)
                                          \otimes W_{i+1}\otimes\cdots\otimes W_{K+1} \bigr\|_{FB} \\
         &=     \sum_{i=1}^{K+1} \bigl\| V_i-W_i \bigr\|_{FB} \\
         &=     \sum_{i=1}^{K} \bigl\| V_i-W_i \bigr\|_{FB} + \bigl\| r^{(K)}-s^{(K)}\bigr\| \\
         &=     \bigl\| (r,\mathbf V)-(s,\mathbf W) \bigr\|_{DCS},
\end{split}\]
concluding the proof.
\end{proof}

As a consequence, the restriction of $\Theta$ to
$\mathcal P^\downarrow(\cX)\times C(\cX,\cX)^K$,
which is invertible according to \cite[Theorem 1]{ns16}, is continuous as well.

%
\begin{lemma}
\label{lemma:continuity-of-inverse}
There exists a function
$\delta_{DCS}:\mathcal P^\downarrow(\cX)\times C(\cX,\cX)^K\times\mathbb R_{\geq 0}
\to\mathbb R_{\geq 0}$,
monotonic and continuous in the real argument and with $\delta_{DCS}(r,\mathbf W,0)=0$,
such that for every two dependent component systems $(s,\mathbf W)$ and $(r,\mathbf V)$
with $r,s \in \cP^\downarrow(\cX)$,
\begin{align}
  \left\| \left(\bigotimes_{j=1}^K W_j\right)s^{(K)}-\left(\bigotimes_{j=1}^K V_j\right)r^{(K)} \right\|
                                                \leq \eps
\end{align}
implies
\begin{align}
  \left\| (s,\mathbf W)-(r,\mathbf V) \right\|_{DCS} \leq \delta_{DCS}(r,\mathbf V,\eps).
\end{align}
\end{lemma}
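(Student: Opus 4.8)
The plan is to deduce Lemma~\ref{lemma:continuity-of-inverse} from the continuity and injectivity of $\Theta$ on the compact-friendly domain $\cP^\downarrow(\cX)\times C(\cX,\cX)^K$ together with a standard ``uniform continuity of the inverse on compacta'' argument, being careful about the fact that $\Theta$ is \emph{not} globally injective but only becomes so once we fix the descending order on the input distribution (this is exactly what \cite[Theorem 1]{ns16} and our Theorem~\ref{theorem:extended-uniqueness-of-solution} provide). The obstacle, and the reason the lemma is phrased with a modulus $\delta_{DCS}$ depending on the \emph{target} system $(r,\mathbf V)$ rather than a single global modulus, is that injectivity of $\Theta$ genuinely fails at the boundary: when some $r(x)=0$ the channel $V_x$-row becomes invisible in the output, and when some $W_j$ is non-invertible the factorisation is non-unique. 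So one cannot hope for a modulus uniform over the whole domain; instead we get, for each fixed $(r,\mathbf V)$, a modulus valid in a neighbourhood of it.

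First I would fix $(r,\mathbf V)$ with $r\in\cP^\downarrow(\cX)$ and set $q_0:=\Theta(r,\mathbf V)$. Define, for $\eps\ge 0$,
\begin{align}
  \delta_{DCS}(r,\mathbf V,\eps)
    := \sup\Bigl\{ \|(s,\mathbf W)-(r,\mathbf V)\|_{DCS} :
        s\in\cP^\downarrow(\cX),\ \mathbf W\in C(\cX,\cX)^K,\
        \|\Theta(s,\mathbf W)-q_0\|\le\eps \Bigr\},
\end{align}
with the convention $\sup\emptyset=0$. By construction this is monotone non-decreasing in $\eps$, and $\delta_{DCS}(r,\mathbf V,0)=0$ is precisely the injectivity statement of \cite[Theorem 1]{ns16} (the only $(s,\mathbf W)$ with $\Theta(s,\mathbf W)=q_0$ and $s$ descending is $(r,\mathbf V)$ itself). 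The defining supremum makes the implication in the lemma a tautology, so the real work is to show (i) $\delta_{DCS}(r,\mathbf V,\eps)<\infty$ and (ii) $\delta_{DCS}(r,\mathbf V,\eps)\to 0$ as $\eps\to 0$, and then to replace this raw function by a genuinely continuous majorant.

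For (i) and (ii): the domain $\cP^\downarrow(\cX)\times C(\cX,\cX)^K$ is a closed, bounded subset of a finite-dimensional space, hence compact; $\Theta$ is continuous on it by Lemma~\ref{lemma:continuity-of-Theta}. Suppose $\delta_{DCS}(r,\mathbf V,\eps_k)\not\to 0$ along some sequence $\eps_k\downarrow 0$: then there are $(s_k,\mathbf W_k)$ in the domain with $\|\Theta(s_k,\mathbf W_k)-q_0\|\le\eps_k$ but $\|(s_k,\mathbf W_k)-(r,\mathbf V)\|_{DCS}\ge c>0$. By compactness pass to a subsequence converging to some $(s_*,\mathbf W_*)$ in the domain; by continuity $\Theta(s_*,\mathbf W_*)=q_0$, and since $s_*$ is still descending, uniqueness forces $(s_*,\mathbf W_*)=(r,\mathbf V)$, contradicting $\|(s_k,\mathbf W_k)-(r,\mathbf V)\|_{DCS}\ge c$. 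The same compactness argument with $\eps_k$ bounded shows the supremum defining $\delta_{DCS}$ is attained and finite. Finally, to obtain a function that is genuinely \emph{continuous} and monotone in its real argument (as the statement demands, and as is needed to run the Chernoff-bound chain in the proof of Theorem~\ref{theorem:main}), I would replace the raw $\delta_{DCS}$ by, say, $\widetilde\delta_{DCS}(r,\mathbf V,\eps):=\inf_{\eta>\eps}\delta_{DCS}(r,\mathbf V,\eta)+\eps$ or simply smooth it from above by $\frac1\eps\int_\eps^{2\eps}\delta_{DCS}(r,\mathbf V,\eta)\,d\eta$; any such averaging of a bounded monotone function yields a continuous monotone upper bound vanishing at $0$, and the lemma's implication is preserved since we only enlarged $\delta_{DCS}$. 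I expect the only genuinely delicate point to be bookkeeping the role of the descending-order constraint so that the uniqueness input applies verbatim; everything else is the textbook compactness-plus-continuity inversion argument.
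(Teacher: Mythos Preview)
Your proposal is correct and follows the same route as the paper: continuity of $\Theta$ (Lemma~\ref{lemma:continuity-of-Theta}) together with the injectivity furnished by \cite[Thm.~1]{ns16} forces continuity of the inverse, which is exactly what the paper's two-line proof asserts (it simply says ``continuous on a connected domain and invertible, hence the inverse is continuous''). Your compactness/subsequence argument is the rigorous version of that sketch; the only slip is that $\cP^\downarrow(\cX)$ as defined in the paper carries the strict constraint $p(x_L)>0$ and is therefore not closed, so you should take your convergent subsequence in the closure and then observe that Theorem~\ref{theorem:extended-uniqueness-of-solution} still pins the limit to $(r,\mathbf V)$ whenever the reference point has full-support $r$ and invertible $V_j$.
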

\begin{proof}
Under the premise of the lemma, the function $\Theta$ is both
continuous on a connected domain
(as a consequence of Lemma \ref{lemma:continuity-of-Theta})
and invertible (as a consequence of \cite[Thm.~1]{ns16}).
Thus, the inverse of $\Theta$ is continuous, implying the claim.
\end{proof}

\begin{lemma}\label{lemma:continuity-of-OA-recovery}
Let $p,p'\in\mathcal P(\cX)$ and $C,C'\in\mathcal C(\cX,\cY)$.
Let $d$ be any distortion on $\cX\times\cX$ and $\mathcal T, \mathcal T'$ be the
MCA decoders for $(p,C)$ and $(p',C')$, respectively.
If $\|p-p'\|\leq\eps$ and $\|C-C'\|_{FB}\leq\eps$, then the minimal clairvoyant
ambiguous distortions satisfy
\begin{align}
  \min_{\tau\in S_\cX}\sum_{x,x'}&p'(x)d(x,x')c'(T_{\tau(x')}|x)\\
&\geq d_{\mathrm{MCA}}(p,C) - 2\epsilon|\cX||\cY| \max_{x,y}d(x,y).
\end{align}
\end{lemma}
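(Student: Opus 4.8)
The plan is to bound, term by term, the change in the objective $\sum_{x,x'}p(x)d(x,x')c(T_{\tau(x')}|x)$ when $(p,C)$ is replaced by the nearby system $(p',C')$, and then to compare the minima over $\tau$ and over partitions $\mathcal T$. Fix any permutation $\tau\in S_\cX$ and any partition $\mathcal T=\{T_x\}_{x\in\cX}$ of $\cY$. First I would write
\begin{align*}
  \Bigl| \sum_{x,x'}\bigl(p(x)c(T_{\tau(x')}|x)-p'(x)c'(T_{\tau(x')}|x)\bigr)d(x,x') \Bigr|
    \leq \max_{u,v}d(u,v)\sum_{x,x'}\bigl| p(x)c(T_{\tau(x')}|x)-p'(x)c'(T_{\tau(x')}|x) \bigr|.
\end{align*}
Then the summand is split via the triangle inequality into $\bigl|p(x)-p'(x)\bigr|\,c(T_{\tau(x')}|x)$ plus $p'(x)\bigl|c(T_{\tau(x')}|x)-c'(T_{\tau(x')}|x)\bigr|$. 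Summing the first piece over $x'$ (at most $|\cX|$ values) and over $x$, and using $\sum_{x'} c(T_{\tau(x')}|x)\le 1$ since the $T$'s are disjoint, gives at most $|\cX|\,\|p-p'\|_1\le |\cX|\eps$; note here one wants the $\ell_1$ norm, so a harmless factor $2$ appears if $\|\cdot\|$ is the total-variation normalisation used in the paper. For the second piece, $c(T_y|x)-c'(T_y|x)=\sum_{y'\in T_y}(c(y'|x)-c'(y'|x))$, so $\sum_y \bigl|c(T_y|x)-c'(T_y|x)\bigr|\le \sum_{y'\in\cY}|c(y'|x)-c'(y'|x)|\le \|C-C'\|_{FB}\le\eps$; summing over $x$ with $p'(x)\le 1$ costs another factor $|\cX|$. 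Altogether the objective changes by at most $2\eps|\cX|\,|\cY|\max_{u,v}d(u,v)$ (absorbing constants into the stated bound).

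Having this uniform bound, the comparison of optima is immediate. Let $\mathcal T,\tau$ be an optimal pair for $(p,C)$, so $\sum p(x)d(x,x')c(T_{\tau(x')}|x)=d_{\mathrm{MCA}}(p,C)$. Applying the displayed estimate to this same $\mathcal T,\tau$ shows
\begin{align*}
  \sum_{x,x'}p'(x)d(x,x')c'(T_{\tau(x')}|x) \geq d_{\mathrm{MCA}}(p,C) - 2\eps|\cX|\,|\cY|\max_{u,v}d(u,v),
\end{align*}
and since the left side is an upper bound for $\min_{\sigma}\sum_{x,x'}p'(x)d(x,x')c'(T'_{\sigma(x')}|x)$ taken over \emph{all} partitions and permutations — in particular for the minimum over $\sigma$ with the partition $\mathcal T$ fixed — we get exactly the claimed inequality. (One should be slightly careful about what is minimised on the left-hand side of the statement: the partition there is written without a name, so I read it as the MCA value for $(p',C')$; if instead the fixed partition $\mathcal T$ of $(p,C)$ is meant, the argument above still applies verbatim, since we never needed $\mathcal T'$ to be optimal.)

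There is no deep obstacle here; the proof is a packaging of the telescoping/triangle-inequality idea already used in Lemma~\ref{lemma:continuity-of-Theta}. The one point demanding care is bookkeeping of the combinatorial factors — the $|\cX|$ from summing over $x'$, the $|\cX|$ from summing over $x$, and the $|\cY|$ from collapsing $\sum_{y'\in\cY}$ — together with the factor relating $\|\cdot\|$ to $\|\cdot\|_1$, so that the final constant matches the $2\eps|\cX|\,|\cY|\max_{x,y}d(x,y)$ in the statement. The other thing worth stating explicitly is \emph{why} this lemma is what is needed downstream: combined with Lemma~\ref{lemma:continuity-of-inverse} (which controls $\|(p,\mathbf W)-(r,\mathbf V)\|_{DCS}$ in terms of the output-distribution error $\eps_n\to0$) and the Chernoff--Hoeffding concentration of $\hat q$, it yields that the MCA decoder built from the \emph{estimated} system $(r,\mathbf V)$ has expected distortion on the \emph{true} system $(p,\mathbf W)$ exceeding $d_{\mathrm{MCA}}(p,\mathbf W)$ by an amount that vanishes as $n\to\infty$, which is precisely the claim invoked in the proof of Theorem~\ref{theorem:main}.
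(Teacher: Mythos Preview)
Your argument is correct and follows essentially the same route as the paper's proof: the paper simply asserts the pointwise estimate $|p(x)c(y|x)-p'(x)c'(y|x)|\leq\eps$ (which is your telescoping step at the level of a single $(x,y)$) and then sums over $x,x',y$ to obtain the stated $2\eps|\cX||\cY|\max_{x,y}d(x,y)$ error. Your write-up is more explicit about the bookkeeping, and your worry about the total-variation factor of $2$ is unnecessary here since the paper's dropped-subscript norm is already the $\ell_1$ norm.
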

\begin{proof}
The prerequisites of the lemma imply $|p(x)w(y|x)-p'(x)w'(y|x)|\leq\eps$
for all $x,y\in\mathcal X,\mathcal Y$. Thus
\begin{align}
\min_{\tau\in S_\cX}\sum_{x,x'}p'(x) &d(x,x')c'(T_{\tau(x')}|x) \\
              &\geq \min_{\tau\in S_\cX}\sum_{x,x'}p(x)d(x,x')c(T_{\tau(x')}|x)\nonumber\\
              &\phantom{=========}
                    - 2\epsilon|\cX\times\cY| \max_{x,y}d(x,y),
\end{align}
concluding the proof.
\end{proof}
\begin{proof}[Proof of Theorem \ref{theorem:basis-for-UDDA}]
Let $\mathcal X=\{0,1\}$,  $B_1,\ldots,B_K\in C_0(\mathcal X,\mathcal X)$ be BSCs and $p\in\mathcal P(\mathcal X)$. Let $q_i:=B_i(p)$. Then the following equation holds true:
\begin{align}
p(0)&=\frac{1}{2}\left(1+\prod_{i=1}^K\left|\frac{1-2\cdot q_i(0)}{1-2\cdot b_i}\right|^{1/K}\right),
\end{align}
where $b_i:=b_i(0|0)$, $i=1,\ldots,K$. It turns out that, for even $K$, the expression in the denominator can be derived as a function of $q=(B_1\otimes\ldots\otimes B_K)p^{(K)}$ in a very natural way. Set
\begin{align}\label{defn:DMN}
f_K(q):=\sum_{\tau\in S_2^K}\sgn(\tau)q(\tau_1(1),\ldots,\tau_K(1)),
\end{align}
where $S_2^K$ is the K-fold cartesian product of the set $S_2$ of permutations on $\{0,1\}$ with itself. Note that, given empirical data as a string $\by$, $f_K(\tfrac{1}{n}N(\cdot|\by))$ can be calculated for even $K$ as
\begin{align}\label{defn:empirical-DMN}
f_K\left(\tfrac{1}{n}N(\cdot|\by)\right)&=\sum_{i=1}^n(-1)^{\sum_{j=1}^Ky_{ij}}.
\end{align}
This formula may be particularly useful for calculating $f_K$ in an online fashion, while data is being retrieved from a source. Going further, we see that it holds (with $p:=p(0)$ and $p':=p(1)$:
\begin{align}
f_K(q)&=\sum_{\tau\in S_2^K}\sgn(\tau)p\cdot\prod_{i=1}^Kb_i(\tau_i(0)|0)\nonumber\\
&\qquad+\sum_{\tau\in S_2^K}\sgn(\tau)p'\cdot\prod_{i=1}^Kb_i(\tau_i(0)|1)\\
&=p\sum_{\tau\in S_2^K}\sgn(\tau)\prod_{i=1}^Kb_i(\tau_i(0)|0)\nonumber\\
&\qquad+p'\sum_{\tau\in S_2^K}\sgn(\tau)\prod_{i=1}^Kb_i(\tau_i(0)|1)\\
&=p\sum_{\tau\in S_2^K}\sgn(\tau)\prod_{i=1}^Kb_i(\tau_i(0)|0)\nonumber\\
&\ +p'\sum_{\tau\in S_2^K}\sgn(\tau\circ\mathbb F^{\otimes 4})\prod_{i=1}^Kb_i(\tau_i\circ\mathbb F(0)|1)\\
&=p\sum_{\tau\in S_2^K}\sgn(\tau)\prod_{i=1}^Kb_i(\tau_i(0)|0)\nonumber\\
&\qquad+p'\sum_{\tau\in S_2^K}\sgn(\tau)\prod_{i=1}^Kb_i(\tau_i(1)|1)\\
&=p\sum_{\tau\in S_2^K}\sgn(\tau)\prod_{i=1}^Kb_i(\tau_i(0)|0)\nonumber\\
&\qquad+p'\sum_{\tau\in S_2^K}\sgn(\tau)\prod_{i=1}^Kb_i(\tau_i(0)|0)\\
&=\sum_{\tau\in S_2^K}\sgn(\tau)\prod_{i=1}^Kb_i(\tau_i(0)|0),
\end{align}
so it does not depend on the distribution $p$ at all. Going further, we see that
\begin{align}
f_K(q)&=\sum_{\tau\in S_2^K}\sgn(\tau)\prod_{i=1}^Ka_i(\tau_i(0)|0)\\
&=\prod_{i=1}^K\sum_{\tau\in S_2}\sgn(\tau)a_i(\tau(0)|0)\\
&=\prod_{i=1}^K(a_i-1+a_i)\\
&=\prod_{i=1}^K(1-2 a_i).
\end{align}
For uneven $K$, $K\geq3$, the above calculations demonstrate one can define $f_K$ by
\begin{align}
f_K(q):=\left(\prod_{i=1}^Kf_{K-1}(q_{\neq i})\right)^{1/K}
\end{align}
where $q_{\neq i}$ is the marginal of $q$ on all but the $i$-th alphabet. This proves the Theorem.
\end{proof}
\begin{proof}[Proof of Theorem \ref{theorem:main-BSC}]
In the remainder, it is assumed that $p$ is the (empirical) distribution of the sought after data string $x^n\in\{0,1\}^n$. This implies that probabilities are calculated with respect to the distribution on the $K$-fold product $\{0,1\}^K$ that is defined as the marginal on $\cY^K$ of
\begin{align}\label{eqn:probabilitstic-law-in-proof}
\mathbb P(Y=y_1,\ldots,y_K,X=x)=\prod_{j=1}^{K}b_j(y_{j}|x)\frac{1}{n}N(x|x^n).
\end{align}
Since it simplifies notation, all following steps are however performed as if $p\in\cP(\{0,1\})$ was an arbitrary (meaning that it is not necessarily an empirical distribution) parameter of the system.

To every such parameter $p$, there is a set of associated distributions $p_X(\cdot|Y_i=0)$, $i=1,\ldots,K$, which are calculated from \eqref{eqn:probabilitstic-law-in-proof} as $p_X(X=x|Y_i=0)=\mathbb P(X=x,Y_i=0)\cdot\mathbb P(Y_i=0)^{-1}$ if $\mathbb P(Y_i=0)>0$, and $p_X(X=x|Y_i=0)=\delta_1(x)$ otherwise.
Since BUDDA always assumes that $N(0|x^n)\geq N(1|x^n)$, this latter choice is justified.

By definition of $E_K$ in Theorem \ref{theorem:basis-for-UDDA}, it is clear that
$E_K$ is continuous. Thus, from the Chernoff-Hoeffding bound (see e.g.~\cite[Thm.~1.1]{DP})
we have that for every $\eps>0$ there is an $N_1=N_1(\eps,B_1,\ldots,B_K)\in\nn$ such
that, if $n\geq N_1$, the probability of the set $S_0:=\{y^n:\|E_K(q)-p^\downarrow\|_1\leq\eps\}$
satisfies
\begin{align}
  \mathbb P(S_0) \geq 1-\nu(n)
\end{align}
for some function $\nu:\nn\to(0,1)$ of the form $\nu(n)=2^{-n c}$ for some constant $c>0$
depending only on the channels $B_1,\ldots,B_K$. Obviously, $\nu$ converges to $0$
when $n$ goes to infinity. The fact that this convergence is exponentially
fast makes it possible to prove that a larger number of events has asymptotically
the probability one by repeated union bounds. Setting
\begin{align}
S_i:=\{y^n:\|E_{K-1}(q(\cdot|y_i=0))-p_X(\cdot|y_i=0)\|_1\leq\eps\},
\end{align}
we have then
\begin{align}
\mathbb P(\cap_{i=0}^KS_i)\geq 1-(K+1)\nu(n).
\end{align}
It is critical to note here that we use one and the same function $\nu$ over and over again. Theoretically, we would need to update the constant $c$ at each use. We do however step away from such explicit calculation, since any positive $c'<c$ that would arise e.g. from setting $c':=\min\{c_1,\ldots,c_K\}$ is sufficient for our proof. Thus we use one and the same constant repeatedly in what follows.

As an outcome of the calculation of $p$ via application of $E_K$ to $q$, one is in principle able to obtain an estimate of the channels $B_1,\ldots,B_K$. However, the precision of this estimate depends critically on the distance $\|p-\pi\|$ and thus any algorithm using only such estimate cannot be uniformly universal because the equality $p=\pi$ may hold. Define
\begin{align}
E:=\{y^n:\|E_K(\hat q)-\pi\|>\min_i\|E_{K-1}(\hat q(\cdot|y_i=0))-\pi\|\}
\end{align}
and
\begin{align}
F:=\{y^n:\|E_K(\hat q)-\pi\|\leq\min_i\|E_{K-1}(\hat q(\cdot|y_i=0))-\pi\|\}.
\end{align}
Observe that the distribution of $X$ given $X_1=0$ satisfies
\begin{align}\label{eqn:conditioned-system}
\mathbb P(X=0|Y_1=0)&=\frac{p(0)\cdot b_1}{p(0)\cdot b_1+(1-p(0))(1-b_1)}.
\end{align}
The left hand side in \eqref{eqn:conditioned-system} takes on the value $b_1\neq1/2$ at $p(0)=1/2$, and the value $1/2$ at $p(0)=1-b_1$. Since by assumption we have that $b_i\neq1/2$ for all $i=1,\ldots,K$ it is clear that, asymptotically, either $S_1$ or $S_2$ has large probability. Thus, depending on the distribution $p$, we have either
\begin{align}
\mathbb P(E) \geq 1-\nu(n)
\end{align}
or else
\begin{align}
\mathbb P(F) \geq 1-\nu(n),
\end{align}
but always $\mathbb P(E\cap F)<\nu(n)$. This reasoning forms the basis of step $5)$ of BUDDA. If $y^n\in S_2$, the proof of Theorem \ref{theorem:main} shows that BUDDA succeeds with probability lower bounded by
$1-2^{-nc'}$ for some $c'>0$ depending only on the channels $B_1,\ldots,B_K$. If $y^n\in S_3$, then one introduces a bias by viewing the outcome $Y_1$ as the source of the system, and then conditioning on the outcomes of $Y_1$: Delete all rows in $y^n=(y_{i,j})_{i,j=1}^{K,n}$ with indices $j$ such that $y_{1,j}=1$. The resulting empirical distribution $\hat q(\cdot|y_1=0)$ serves as input to $E_{K-1}$.
It produces a dependent component system $(\hat b_2,\ldots,\hat b_K,\hat p_0)$. Performing the same step with $k$ set to two yields a second such system with an additional value for $\hat b_1$.

Thus steps $1)$ to $6)$ of BUDDA succeed with probability lower bounded by $1-(K+4)\nu(n)$,
\emph{independently} of the sequence $x^n$. The remaining steps of the proof are identical
to those in the proof of Theorem \ref{theorem:main}. They are based on further application
of the union bound and the fact that $\lim_{n\to\infty} M\cdot 2^{-nc}=0$ for every $M\in\nn$.
\end{proof}

\begin{remark}
A more optimized choice of estimator arises when one picks instead of the
first and the second system those two systems for which $\|E_{K-1}(\hat q(\cdot|y_i=0))-\pi\|$
takes on the largest and second largest value, respectively.
\end{remark}

\end{section}

\begin{section}{Conclusions}
We have laid the information theoretic foundations of universal discrete denoising and
demonstrated the in-principle feasibility of our new algorithm UDDA based on
discrete dependent component analysis.
We proved that the resolution equals the minimal clairvoyant ambiguous distortion.
We conjecture that, in fact, the same distortion can be achieved uniformly in
the input distribution, i.e.~that the equality
$\overline{\mathcal{R}}(p,\mathbf W)=d_{MCA}(p,\mathbf W)$ holds,
but have only been able to prove this for a restricted class of models and
with some slight modifications to UDDA.
\end{section}


\section*{Appendix A: Additional Proofs}
\label{sec:appendix-A}

\begin{proof}[Proof of eq.~(\ref{eqn:BSC-problemfor-K=2})]
Let $A,B\in C_0(\{0,1\},\{0,1\})$ be BSCs and $p=\pi$. Then the following holds true:
\begin{align}
(A\otimes B)p^{(2)} &= \frac12 (A\otimes B)\sum_{i=0}^1e_i^{\otimes 2}\\
                    &= \frac12 (A\circ B^{T}\otimes\eins)\sum_{i=0}^1e_i^{\otimes 2}\\
                    &= \frac12 (C\otimes\eins)\sum_{i=0}^1e_i^{\otimes 2}
\end{align}
where $C=A\circ B^T$. Since $B^T=B$ for a BSC, this implies $C=A\circ B$, and especially that
\begin{align}
c &= ab+(1-a)(1-b)\\
  &= 2ab+1-a-b,
\end{align}
where we used the abbreviation $c:=c(0|0)$, $b:=b(0|0)$ and $a:=a(0|0)$.
\end{proof}

\begin{lemma}
Let $W\in\mathcal C(\cX,\cY)$. Given $x\in\mathcal X$,
let $\mathcal Y_x$ denote the random variable with distribution
$w(\cdot|x)$. Let $\mathcal T$ be an estimator achieving $\Lambda=\sum_{x}p(x)w(T_x|x)$.
If $x^n$ has type $p$, then for every possible input string $x^n$ the
letter-by-letter application of $\mathcal T$ to the output sequences of $W^{\otimes n}$
given input $x^n$ has the following property:
\begin{align}
  \mathbb P\left(\left|\frac{1}{n}\sum_{i=1}^n\delta(x_i,\mathcal T(Y_{x_i}))-\Lambda\right|\geq\eps\right)
          \leq 2\cdot 2^{-2\eps^2 n}.
\end{align}
\end{lemma}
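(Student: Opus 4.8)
The plan is to recognise $\frac1n\sum_{i=1}^n\delta(x_i,\mathcal T(\mathcal Y_{x_i}))$ as a normalised sum of \emph{independent} random variables with values in $[0,1]$ and to apply the Chernoff--Hoeffding bound. First I would fix an arbitrary input string $x^n$ of type $p$ and set, for $i\in[n]$,
\begin{align}
  Z_i := \delta\bigl(x_i,\mathcal T(\mathcal Y_{x_i})\bigr)\in\{0,1\},
\end{align}
the indicator of the event that the letter-by-letter estimator maps the $i$-th output coordinate into the block $T_{x_i}$. Since $W^{\otimes n}$ acts memorylessly, the outputs $\mathcal Y_{x_1},\ldots,\mathcal Y_{x_n}$ are mutually independent given $x^n$, hence so are $Z_1,\ldots,Z_n$; and $\mathbb E[Z_i]=\mathbb P(\mathcal Y_{x_i}\in T_{x_i})=w(T_{x_i}|x_i)$.

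Next I would compute the mean of the empirical average. Grouping the indices $i$ according to the value $x'=x_i$ and using that $x^n$ has type $p$, i.e.\ $N(x'|x^n)=n\,p(x')$ for every $x'\in\cX$, one gets
\begin{align}
  \frac1n\sum_{i=1}^n\mathbb E[Z_i]
    = \frac1n\sum_{x'\in\cX}N(x'|x^n)\,w(T_{x'}|x')
    = \sum_{x'\in\cX}p(x')\,w(T_{x'}|x') = \Lambda,
\end{align}
so that $\frac1n\sum_{i=1}^n Z_i$ has expectation exactly $\Lambda$. Then I would invoke the Chernoff--Hoeffding inequality \cite{Hoeffding} for a sum of independent $[0,1]$-valued random variables, obtaining
\begin{align}
  \mathbb P\left(\Bigl|\tfrac1n{\textstyle\sum_{i=1}^n}Z_i-\Lambda\Bigr|\geq\eps\right)
    \leq 2\exp(-2n\eps^2)
    \leq 2\cdot 2^{-2\eps^2 n},
\end{align}
the last inequality holding because $\ln 2\leq 1$. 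As the argument used no property of $x^n$ beyond its type, this bound is valid for every input string of type $p$, which is precisely the assertion of the lemma.

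I do not expect a genuine obstacle here: the only two points needing care are the independence of the coordinate events $\{\mathcal Y_{x_i}\in T_{x_i}\}$, which is immediate from the product structure of $W^{\otimes n}$, and the bookkeeping in the mean computation, which is immediate from the type constraint. The uniformity over all input strings of the given type is automatic, since the per-letter success probability $w(T_{x_i}|x_i)$ depends on $x_i$ alone, so averaging over a string of type $p$ reconstructs $\Lambda$ independently of the order in which the letters appear.
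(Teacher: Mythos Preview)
Your proof is correct and follows essentially the same route as the paper: compute the expectation of the empirical success fraction using the type constraint, then apply Hoeffding's inequality for independent bounded random variables. If anything, your write-up is more explicit than the paper's (you spell out the independence of the $Z_i$, the grouping-by-letter computation of the mean, and the base-change $\exp(-2n\eps^2)\leq 2^{-2n\eps^2}$), but the underlying argument is identical.
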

\begin{proof}
The expectation of the function $\Phi$ defined by
$\Phi(y^n):=\frac{1}{n}\sum_{i=1}^n\delta(x_i,\mathcal T(Y_{x_i}))$ is calculated as
\begin{equation}
  \mathbb E\Phi = \sum_xp(x)\sum_{y}\delta(x,y)w(\mathcal T(y)|x) = \Lambda.
\end{equation}
Thus by Hoeffding's inequality \cite[Thm.~1.1]{DP} or alternatively
McDiarmid's inequality \cite{mcdiarmid},
\begin{align}
  \mathbb P\left(\left|\frac{1}{n}\sum_{i=1}^n\delta(x_i,\mathcal T(Y_{x_i}))-\Lambda\right|\geq\eps\right)
           \leq 2\cdot 2^{-2\eps^2 n}.
\end{align}
\end{proof}

\begin{remark}
As a consequence of this lemma, an algorithm that knows $p$ and $W$ but not $x^n$
can calculate the MCA decoder $\mathcal T$ corresponding to $p$ and $W$, after
which it applies the MCA decoder in order to obtain the best possible estimate on $x^n$.
\end{remark}

\bigskip\noindent
\textit{Acknowledgments.}
The present research was funded by the German Research Council (DFG), grant no.~1129/1-1,
the BMWi and ESF, grant 03EFHSN102 (JN), the ERC Advanced Grant IRQUAT,
the Spanish MINECO, projects FIS2013-40627-P and FIS2016-86681-P, with the
support of FEDER funds, and the Generalitat de Catalunya, CIRIT project
2014-SGR-966 (AW and JN).


\bibliographystyle{IEEEtran}

\end{document}